\definecolor{lightgreen}{rgb}{0.8,1,0.8}
\definecolor{lightred}{rgb}{1,0.8,0.8}
\definecolor{lightorange}{rgb}{1,0.9,0.7}
\DeclarePairedDelimiter{\abs}{\lvert}{\rvert}
\DeclarePairedDelimiter{\of}{\lparen}{\rparen}
\DeclarePairedDelimiter{\sof}{\lbrack}{\rbrack}
\providecommand\given{}
\newcommand\SetSymbol[1][]{%
    \nonscript\:#1\vert
    \allowbreak
    \nonscript\:
    \mathopen{}}
\DeclarePairedDelimiterX\Set[1]\{\}{%
    \renewcommand\given{\SetSymbol[\delimsize]}
    #1
}
\newcommand{\U}[1]{\mathrm{U}_{#1}}
\DeclareMathOperator{\imm}{Im}
\DeclareMathOperator{\per}{per}
\DeclareMathOperator{\tr}{tr}
\DeclareMathOperator{\SEP}{SEP}
\DeclareMathOperator{\id}{id}
\newcommand{\bra}[1]{\mathinner{\langle #1|}}
\newcommand{\ket}[1]{\mathinner{|#1\rangle}}
\newcommand{\ketbra}[2]{\mathinner{| #1 \rangle\!\langle #2 |}}
\newcommand{\dyad}[1]{| #1\rangle \langle #1|}
\renewcommand{\k}{\kappa}
\newcommand{\kSEP}{\kappa\operatorname{-SEP}}
\newcommand{\one}[0]{\mathds{1}}
\newcommand{\cdn}{(\C^d)^{\otimes n}}
\newcommand{\R}{\mathds{R}}
\newcommand{\C}{\mathds{C}}
\newcommand{\vast}{\bBigg@{4}}
\newcommand{\Vast}{\bBigg@{5}}
\newtheorem{theorem}    {Theorem}
\newtheorem{cor}    {Corollary}
\newtheorem{proposition}[theorem]{Proposition}
\newtheorem{observation}[theorem]{Observation}
\newtheorem{definition}    [theorem]{Definintion}
\begin{document}

\title{
Detection of many-body entanglement partitions in a quantum computer
}
\author{Albert Rico${}^{1\orcidlink{0000-0001-8211-499X}}$, Dmitry Grinko${}^{2,3,4\orcidlink{0000-0001-6438-1998}}$, Robin Krebs$^{5\orcidlink{0009-0006-0397-9578}}$, Lin Htoo Zaw${}^{6\orcidlink{0000-0003-1559-6903}}$}
\affiliation{$^1$GIQ - Quantum Information Group, Department of Physics, Autonomous University of Barcelona, Spain}%
\affiliation{$^2$QuSoft, Amsterdam, The Netherlands}
\affiliation{$^3$Institute for Logic, Language and Computation, University of Amsterdam, The Netherlands} 
\affiliation{$^4$Korteweg-de Vries Institute for Mathematics, University of
Amsterdam, The Netherlands}
\affiliation{$^5$Quantum Computing, Technische Universität Darmstadt}
\affiliation{$^6$Centre for Quantum Technologies, National University of Singapore, 3 Science Drive 2, Singapore 117543}
\date{\today}
\begin{abstract}
We present a method to detect entanglement partitions of multipartite quantum systems, by exploiting their inherent symmetries. Structures like genuinely multipartite entanglement, $m$-separability and entanglement depth are detected as very special cases. 
This formulation enables us to characterize all the entanglement partitions of all three- and four- partite states and witnesses with unitary and permutation symmetry. In particular, we find and parametrize a complete set of bound entangled states therein. 
For larger systems, we provide a large family of analytical witnesses detecting many-body states of arbitrary size where none of the parties is separable from the rest. This method relies on weak Schur sampling with projective measurements, and thus can be implemented in a quantum computer.  
Beyond Physics, our results extend to the mathematical literature: we establish new inequalities between matrix immanants, and characterize the set of such inequalities for matrices of size three and four.

\end{abstract}

\maketitle

\section{Introduction}
Over the past decades, a large improvement has been done in the control over quantum systems of increasing size~\cite{advancesHDent_Erhard2020,EfficientLargeScaleMBdyn_Artaco2024}. This advances the development of quantum devices, for which a key feature is the presence of quantum entanglement~\cite{EntDetRev_Guhne2009}. Yet, multiple systems can be entangled in different structures, and each is beneficial for different applications. The most resourceful case is arguably genuinely multipartite entanglement, where no subsystem is separable to the rest, and a variety of methods have been derived to detect it~\cite{Dur_SepGME2000,Guhne_SepGME2010,MHuber_2010HDdetectGME}. However, detecting finer structures is a notorius challenge~\cite{Ananth_CritkSep2015,Guhne2006EnergyKsep,Szalay2019KsepMprod,Gao_2013EfficientKsep,Seevinck2001Suf3EntKsep,Ghhne2005MultiESpinChKsep}.

At the same time, rapid progress in quantum control is increasingly approaching the physical realization of quantum computers~\cite{haffner2008quantumCompIon,Preskill2018quantumcomputingin,devoret2013superconductingComp}. Beyond enabling efficient algorithms for classically difficult problems~\cite{grover1996fast,ShorFactorization_1997,arute2019quantumCompGoogle,zhong2020quantumComp}, quantum computers are also powerful tools for analysing quantum systems, for instance in many-body simulation~\cite{blatt2012quantumSimulIons,fauseweh2024quantumSimulation} and quantum chemistry~\cite{aspuru2005simulatedChemistry,colless2018computationChemistry}. In particular, quantum hardware is a natural platform where to certify the correlations of quantum systems. While well-established techniques such as randomized measurements are effective for bipartite systems~\cite{Neven_SymmetryResMomentsPT2021,Elben2020,Elben22Toolbox,Cieslinski_2024AnalyzingQSRandomMeas}, less is known for more involved entanglement structures.

Here we develop a technique to detect the entanglement structure of a multipartite system, which is tractable to implement in a quantum computer. By exploiting entanglement symmetries under permutations and unitaries (Observation~\ref{obs:KSepAlpha}), we detect the entanglement partition a of quantum system (Definition~\ref{def:K-sep}). We characterize the entanglement partitions of three- and four-partite invariant systems (Proposition~\ref{prop:CharactWitN=3} and Table~\ref{tab:Main4Partite}), and identify the families of bound entangled states therein. Exploiting symmetries further, we derive analytical criteria to certify that no local parties are separable from the rest in many-body systems (Theorem~\ref{thm:exact_(1,n-1)_sep}). Different methods for practical estimation of the criteria introduced are compared in Table~\ref{tab:Scaling}. Besides contributing to Physics, the results presented lead to new symmetric matrix inequalities, a long-standing problem in the mathematics community (Proposition~\ref{prop:WitnessImmanant}). In particular we characterize the set of immanant inequalities for positive semidefinite matrices of size three and four.

\begin{figure}[tbp]
    \centering
    \includegraphics[width=0.9\linewidth]{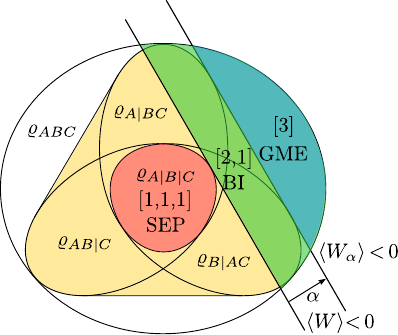}
    \caption{{\bf Structures of three-partite separability.} Three-partite states can be fully separable (SEP), with separability partition $\k=[1,1,1]$; biseparable ($\varrho_{AB|C}$, $\varrho_{A|BC}$, $\varrho_{AC|B}$ and their convex combination), with partition $\k=[2,1]$; or genuinely multipartite entangled (GME), with $\k=[3]$. We shift symmetric witnesses $W$ detecting biseparable states (green area) by a parameter $\alpha_\k$, to construct new witnesses detecting states outside of certain separability partitions (blue area).}
    \label{fig:GME}
\end{figure}

\section{Main goal}
Consider a three-partite quantum system. A state is fully separable if it can be written as 
$\varrho_{ABC}=\sum_i p_i\varrho_A^i\otimes\varrho_B^i\otimes\varrho_C^i$. It is biseparable if it can be written as 
\begin{equation}
    \varrho_{ABC}=\sum_{i,j,k}p_i\varrho_A\otimes\varrho_{BC}+p_j\varrho_B\otimes\varrho_{AC}+p_k\varrho_{AB}\otimes\varrho_{C},
\end{equation}
where $p_i,p_j,p_k\geq 0$ and $\sum_{i,j,k}p_i+p_j+p_k=1$. And it is genuinely multipartite entangled if it is neither fully separable nor biseparable (see Fig.~\ref{fig:GME}). The case of larger systems is more involved, as separability structures could be distributed in multiple ways. Here we consider the following entanglement structures, which fine-grain the entanglement properties of the state in hand~\cite{Huber2009TwoCompEntParts,Huber2013EntVecFormPartEnt,Ren2021MetrologicalDetPartitions,Garcia2023ExpEntParts,Lu2018EntPart,zhou2019MPEminRes}:
\begin{definition}\label{def:K-sep}
Given a partition $\k = [k_1|...|k_{m}]$ with $k_1\geq ...\geq k_m$ and $k_1+...+k_m=n$, an $n$-partite state $\varrho$ is $\k$-separable, written $\varrho \in \kSEP$, if
\begin{equation}\label{eq:VecKsep}
    \varrho = \sum_ip_i\varrho^i_{K_1^i}\otimes...\otimes\varrho^i_{K_m^i},
\end{equation}
where each factor $\varrho^i_{K_r}$ is shared in a disjoint collection $K_r^i \subseteq \{1,\dotsc,n\}$ of $\abs{K_r^i} = k_r$ local parties.

\end{definition}
For example, the state $\varrho_A\otimes\varrho_{BCD}$ is $[3|1]$-separable and the state $\varrho'=(\varrho_{AB}\otimes\varrho_{CD}+\varrho_{AC}\otimes\varrho_{BD})/2$  is $[2|2]$-separable. 

Determining the separability partition $\k$ of a state is crucial for multipartite quantum protocols such as distributed computing and network communiaction. In particular, two paradigmatic notions of multipartite entanglement stem from it: {\em $k_1$-producibility} or {\em entanglement depth}, namely the largest number $k_1$ of systems among the terms decomposing $\varrho$~\cite{Sorensen2001EntDepth}; and {\em $m$-separability, namely the number $m$ of composite systems in a product state}~\cite{Gao_2013EfficientKsep}. These forms of entanglement are dual to each other~\cite{Szalay2019KsepMprod}, and have been considered in several forms~\cite{NetworkGMPE_Navascues2020} for practical applications. While a variety of methods have been long studied to detect both non-$m$-separability~\cite{Dur_SepGME2000,Seevinck2001Suf3EntKsep,Ghhne2005MultiESpinChKsep,Guhne2006EnergyKsep,MHuber_2010HDdetectGME,Gao_2013EfficientKsep} and non-$k_1$-producibility~\cite{Vitagliano2011SpinSqueezKprod,Vitagliano2018PlanarEntDepth,Aloy2019DVIedepth}, much less is known about the more general characterization of Definition~\ref{def:K-sep}~\cite{Huber2009TwoCompEntParts,Huber2013EntVecFormPartEnt,Ren2021MetrologicalDetPartitions,Garcia2023ExpEntParts}.

\section{Methodology}
\begin{figure*}[tbp]
    \centering
    \includegraphics[width=\linewidth]{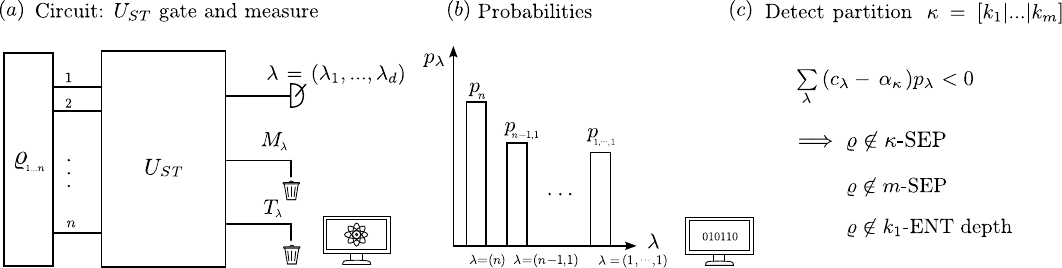}
    \caption{{\bf Detection of entanglement partitions $\kappa$ in $n$-partite systems.} $(a)$ We perform weak Schur sampling on the $n$-partite state describing the system, for instance implementing the Schur transform gate ($U_{ST}$) and measuring the register $\lambda$ while discarding the registers $M_\lambda$ and $T_\lambda$. This step is tractable in a quantum computer, and efficiency can be gained with more direct methods of measuring $\lambda$ such as generalized phase estimation (see Table~\ref{tab:Scaling}). 
    $(b)$ We repeat the procedure multiple times, and store in a classical computer the probabilities $p_\lambda$ of obtaining each outcome $\lambda$. These probabilities are the projections of $\varrho$ onto each irreducible subspace labelled by $\lambda$. $(c)$ Based on the probabilities $p_\lambda$, we detect states that are not in certain separability partitions $\kappa$. In particular, the first component $k_1$ of $\kappa$ determines the entanglement depth of $\varrho$, and the number of components $m$ in $\kappa$ determines its separability length.}
    \label{fig:Circuit}
\end{figure*}
To detect $\k$-separability efficiently, we will exploit two symmetries of the problem:

- {\em Permutation symmetry:} it follows from Eq.~\eqref{eq:VecKsep} that the separability partition $\k$ of a quantum state $\varrho$ cannot be affected by the action $V_\pi$ of a permutation $\pi\in S_n$,
\begin{equation}
    V_\pi\varrho V_\pi^\dag \in \kSEP \iff \varrho \in \kSEP\,.
\end{equation}

- {\em Unitary symmetry:} It is a defining property of entanglement that it cannot be affected by the action of local unitary gates. Here we will use entanglement symmetry under the diagonal action of the unitary group,
\begin{equation}
    U^{\otimes n}\varrho {U^{\otimes n}}^\dag  \in \kSEP \iff \varrho \in \kSEP\,.
\end{equation}

Extensive work has been done in the study of entanglement exploiding separately unitary~\cite{Eggeling2001UnSymmEnt} and permutation~\cite{Toth2009EntPerSym} symmetries. Here we use the combination between both symmetries above, which are exactly those of the Hilbert space decomposition through Schur--Weyl duality,
\begin{equation}
\label{eq:SW_duality}
    \cdn\simeq \bigoplus_{\lambda \vdash_d n} \mathcal U_\lambda\otimes \mathcal S_\lambda\,,
\end{equation}
where $\lambda=(\lambda_1,\dotsc,\lambda_{d})$ is a partition of $n$ in the sense that $\lambda_1+\dotsc+\lambda_{d}=n$ and $\lambda_1 \geq \dotsc \geq \lambda_{d} \geq 0$. The notation $\lambda \vdash_d n$ means that $\ell(\lambda)\leq d$, where $\lambda$ is a partition of $n$ elements and $\ell(\lambda)$ is the number of non-zero components. Here $\mathcal U_\lambda$ is an irreducible representation (irrep) of the unitary group $U(d)$, and $\mathcal S_\lambda$ is an irrep of the symmetric group $S_n$. The projectors $\Pi_\lambda$ onto isotypic components $\mathcal U_\lambda\otimes \mathcal S_\lambda$ are called \emph{Young projectors}. The dimension $d_\lambda$ of the irrep $\mathcal S_\lambda$ is given by evaluating the symmetric group character at the identity, $d_\lambda = \chi_\lambda(\id)$, and it is given by \emph{hook-length} formula. 
The basis change unitary matrix of Equation~(\ref{eq:SW_duality}) is called \emph{Schur transform}, and it can be implemented efficiently on a quantum computer~\cite{EfficientQCirQST_Bacon2006,Krovi2019efficienthigh,grinko2025mixed,burchardt2025hdschur}. Measurement of Young projectors is called \emph{weak Schur sampling} and it can be done efficiently via the Schur transform or via Generalized Phase estimation~\cite{harrow2005thesis, cervero2023weakschursamplinglogarithmic}.

We consider the detection of $\k$-separable states using the linear combination $\sum_{\lambda \vdash_d n} c_\lambda p_\lambda$, where $c_\lambda\in\R$, of probabilities $p_\lambda = \tr(\varrho \Pi_\lambda)$ of obtaining different partitions in measurement outcomes.
Namely, we shall look at witnesses of the form $W=\sum_{\lambda \vdash_d n} c_\lambda  \Pi_\lambda$.
This ansatz is motivated by a wide supply of full-separability witnesses involving projections onto irreducible subspaces, of the form $c_\lambda\Pi_\lambda-c_\mu\Pi_\mu$~\cite{MaassenSlides,long}, available from long-studied so-called {\em matrix immanant inequalities}~\cite{Bhatia_MatAn1997}.
For a given $W$, by considering the minimum expectation value $\alpha_{\k} = \min_{\varrho' \in \kSEP} \tr(\varrho'W)$ over $\kSEP$ and the fact that $\Pi_\lambda$ resolve to the identity, $\k$-separability witnesses are given by
\begin{equation}\label{eq:WitsKSEP}
\sum_{\lambda \vdash_d n} ( c_\lambda - \alpha_{\k} ) p_\lambda < 0 \implies \varrho\not\in \kSEP\,.
\end{equation}
To find the optimal value $\alpha_{\k}$, we will use the permutation invariance of the projectors $\Pi_\lambda$:
\begin{observation}\label{obs:KSepAlpha}
The value $\alpha_{\k}$ is given by
\begin{equation}
    \alpha_{\k} = \min_{\ket{\psi}=\ket{\psi}_{k_1}\otimes...\otimes\ket{\psi}_{k_m}}\bra{\psi}W\ket{\psi},
\end{equation}
where $\ket{\psi}_{k_r}$ is shared among parties $k_{r-1}+1,...,k_r$. 
\end{observation}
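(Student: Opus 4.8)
Write $\beta_\k$ for the right-hand side, i.e.\ the minimum of $\bra\psi W\ket\psi$ over pure states $\ket\psi=\ket\psi_{k_1}\otimes\dots\otimes\ket\psi_{k_m}$ that factorize across the fixed partition of $\{1,\dots,n\}$ into blocks of sizes $k_1,\dots,k_m$ appearing in the statement. The plan is to prove $\alpha_\k\le\beta_\k$ and $\alpha_\k\ge\beta_\k$ using only convexity of $\kSEP$ and permutation invariance of the Young projectors. The first inequality is immediate: any such $\ket\psi$ gives a pure state $\dyad\psi$ which is a single term of Eq.~\eqref{eq:VecKsep} — take the $K^i_r$ to be the blocks of that fixed partition and each $\varrho^i_{K^i_r}$ to be the pure block state $\ket\psi_{k_r}$ — so $\dyad\psi\in\kSEP$ and $\bra\psi W\ket\psi\ge\alpha_\k$.

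For $\alpha_\k\ge\beta_\k$ I would first reduce the minimization to pure product vectors. Since $W=\sum_{\lambda\vdash_d n}c_\lambda\Pi_\lambda$ is Hermitian (real $c_\lambda$, orthogonal projectors $\Pi_\lambda$), the map $\varrho'\mapsto\tr(\varrho'W)$ is a real linear functional. By Definition~\ref{def:K-sep}, after expanding every mixed block state in Eq.~\eqref{eq:VecKsep} as a mixture of pure block states and absorbing the outer mixture over $i$, every element of $\kSEP$ is a convex combination of pure states $\dyad\phi$ with $\ket\phi=\ket{\phi^1}_{K_1}\otimes\dots\otimes\ket{\phi^m}_{K_m}$, where $(K_1,\dots,K_m)$ ranges over all ordered partitions of $\{1,\dots,n\}$ with $\abs{K_r}=k_r$ and each $\ket{\phi^r}$ over pure states on the parties of $K_r$. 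A linear functional on a convex hull attains its minimum at one of the generating points (and the generating set here is a finite union of products of unit spheres, hence compact), so $\alpha_\k=\min_{(K_r),\,\ket{\phi^r}}\bra\phi W\ket\phi$, the minimum being over all partitions of the prescribed shape.

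It remains to collapse this to the single fixed partition. Given an arbitrary partition $(K_1,\dots,K_m)$ and product vector $\ket\phi$ as above, choose $\pi\in S_n$ mapping each $K_r$ onto the $r$-th block of the fixed partition, so that $V_\pi\ket\phi$ factorizes across the latter. Because the Schur--Weyl decomposition Eq.~\eqref{eq:SW_duality} is $S_n$-equivariant, $V_\pi$ is block-diagonal, $V_\pi=\bigoplus_{\lambda\vdash_d n}\one_{\mathcal U_\lambda}\otimes\rho_\lambda(\pi)$ with $\rho_\lambda$ the $S_n$-irrep carried by $\mathcal S_\lambda$; hence $V_\pi$ commutes with every $\Pi_\lambda$ and $V_\pi W V_\pi^\dag=W$. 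Thus $\bra\phi W\ket\phi=\bra{V_\pi\phi}W\ket{V_\pi\phi}$, which shows the minimum over arbitrary partitions already equals the one over the fixed partition, namely $\beta_\k$. Beyond this the argument is routine bookkeeping; the only point needing care — and the closest thing to an obstacle — is the commutation $[\Pi_\lambda,V_\pi]=0$ for all $\pi\in S_n$, i.e.\ the fact that the isotypic decomposition Eq.~\eqref{eq:SW_duality} respects the permutation action, from which everything else follows.
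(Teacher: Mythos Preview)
Your argument is correct and follows essentially the same approach as the paper's proof: reduce to pure $\k$-separable product vectors by convexity, then use permutation invariance of the $\Pi_\lambda$ (hence of $W$) to identify the minimum over all partitions of shape $\k$ with the minimum over the single fixed one. You simply spell out more carefully the two inequalities, the compactness ensuring the minimum is attained, and the Schur--Weyl reason for $[V_\pi,\Pi_\lambda]=0$, whereas the paper compresses all of this into two sentences.
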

\begin{proof}
    The value $\alpha_\k$ is by assumption the smallest expectation value of $W$ over $\k$-separable states.
    By convexity, we need only consider pure states $\ket{\psi} = \ket{\psi}_{k_{n_1}}\otimes...\otimes\ket{\psi}_{k_{n_m}} \in \kSEP$.
    Since the projectors $\Pi_\lambda$ are permutation-invariant, the expectation value $\bra{\psi}W\ket{\psi}$ does not change if all terms decomposing the state $\ket{\psi}$ are permuted to be of the ordered form $\ket{\psi}_{k_1}\otimes...\otimes\ket{\psi}_{k_m}$, which completes the proof.
\end{proof}
Let us consider quantum states $\varrho_{\text{sym}}$ that are invariant under permutations and copies of unitaries, 
\begin{equation}\label{eq:PermuUnitSymm}
    V_\pi \varrho_{\text{sym}} V_\pi^\dag = U^{\otimes n}\varrho_{\text{sym}} {U^\dag}^{\otimes n} = \varrho_{\text{sym}}
\end{equation}
for all $\pi\in S_n$ and $U\in\mathrm{U}(d)$. These can be written as a convex combination of the projectors $\Pi_\lambda$ and are not $\k$-separable if and only if they are detected by some witnesses of the form~\eqref{eq:WitsKSEP}. This is because the minimization of $\tr(W\varrho)$ over $\k$-separability witnesses can be restricted without loss of generality to fully symmetric witnesses, which can be written as~\eqref{eq:WitsKSEP}. For each system size $n$ and $d$, the number of such witnesses that are independent is finite, since the state space is given by a polytope.

\section{Entanglement characterization}
By exploiting the permutation symmetry and unitary symmetry of projectors $\Pi_\lambda$, we characterize the set of three-partite witnesses with these symmetries.
\begin{proposition}\label{prop:CharactWitN=3}
For three-partite systems, all nontrivial (i.e.,~excluding $p_\lambda \geq 0$) criteria of the form~\eqref{eq:WitsKSEP} for separability partitions are 
\begin{align}
 \varrho \in \SEP &\implies
\begin{cases}
    p_{2,1} - 4p_{1,1,1} \geq 0\,,\\
    3p_{3} - p_{2,1} + p_{1,1,1} \geq 0\,,
\end{cases} \\
\varrho \in [2,1]\text{-}\SEP &\implies\quad p_{2,1} - 2p_{1,1,1} \geq 0\,.
\end{align}
and convex combinations thereof.
These criteria are necessary and sufficient for $\k$-separability of tripartite states with symmetry~\eqref{eq:PermuUnitSymm}.
\end{proposition}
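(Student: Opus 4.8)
The plan is to reduce the statement to a two–dimensional convex–geometry computation and to carry it out for the two nontrivial partitions of $n=3$, namely $\k=[1,1,1]$ and $\k=[2,1]$. Since $\Pi_{[3]},\Pi_{[2,1]},\Pi_{[1,1,1]}$ resolve the identity on $(\C^d)^{\otimes 3}$ (for $d\ge 3$; the case $d=2$ is the degenerate sub-case $p_{1,1,1}\equiv 0$), a state satisfying \eqref{eq:PermuUnitSymm} is a point $(p_{[3]},p_{[2,1]},p_{[1,1,1]})$ of the probability simplex, and by the discussion around \eqref{eq:WitsKSEP} it is $\kSEP$ iff it lies in the image $\mathcal T_\k$ of $\kSEP$ under $\varrho\mapsto(\tr\varrho\Pi_\lambda)_\lambda$. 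The set $\mathcal T_\k$ is convex and (as noted) a polytope, so the nontrivial criteria of the form \eqref{eq:WitsKSEP} are exactly its facets other than $p_\lambda\ge 0$; it thus suffices to find the vertices and facets of $\mathcal T_\k$. By Observation~\ref{obs:KSepAlpha}, $\mathcal T_\k$ is the convex hull of the points $\bigl(\bra\psi\Pi_\lambda\ket\psi\bigr)_\lambda$ over pure $\ket\psi$ that factorise according to $\k$.

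For $\k=[1,1,1]$ I would use the immanant representation of the Young projectors: for a full product state with $3\times3$ Gram matrix $G$ of its unit vectors, $\bra\psi\Pi_\lambda\ket\psi=\tfrac{d_\lambda}{6}\operatorname{imm}_\lambda(G)$, so $p_{[3]}=\tfrac16\per G$, $p_{[2,1]}=\tfrac13\operatorname{imm}_{[2,1]}(G)$, $p_{[1,1,1]}=\tfrac16\det G$. Writing $q=|G_{12}|^2+|G_{13}|^2+|G_{23}|^2$ and $r=\Re(G_{12}G_{23}G_{31})$ one gets $\per G=1+q+2r$, $\operatorname{imm}_{[2,1]}(G)=2-2r$, $\det G=1-q+2r$, whence $p_{[2,1]}-4p_{[1,1,1]}=\tfrac23(q-3r)$ and $3p_{[3]}-p_{[2,1]}+p_{[1,1,1]}=\tfrac13 q+2r$. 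Positivity of both on all such $G$ is elementary: $r\le|G_{12}||G_{13}||G_{23}|\le q/3$ (the first criterion is precisely Schur's immanant inequality $\operatorname{imm}_{[2,1]}(G)\ge 2\det G$), while $\det G\ge0$ combined with $r\ge-(q/3)^{3/2}$ forces $q+6r\ge 0$. Tightness is exhibited by $G=\one$ [point $(\tfrac16,\tfrac23,\tfrac16)$], the rank-one $G$ [point $(1,0,0)$], and the rank-two $G$ with $|G_{ij}|=\tfrac12$ for $i\ne j$ and $\det G=0$ [point $(\tfrac14,\tfrac34,0)$]; these three points are the vertices of $\mathcal T_{[1,1,1]}$, whose nontrivial facets are exactly the two stated inequalities.

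For $\k=[2,1]$ I would split the bipartite factor $\ket\phi_{AB}=\ket{\phi^+}+\ket{\phi^-}$ into its parts symmetric and antisymmetric under the swap of $A,B$. Since a vector antisymmetric (resp.\ symmetric) in $AB$ has no component in the fully symmetric (resp.\ fully antisymmetric) subspace, for $\ket\psi=\ket\phi_{AB}\ot\ket v_C$ one gets $\bra\psi\Pi_{[3]}\ket\psi=\|\Pi_{[3]}(\ket{\phi^+}\ot\ket v)\|^2$, $\bra\psi\Pi_{[1,1,1]}\ket\psi=\|\Pi_{[1,1,1]}(\ket{\phi^-}\ot\ket v)\|^2$, hence $\bra\psi(\Pi_{[2,1]}-2\Pi_{[1,1,1]})\ket\psi=1-\|\Pi_{[3]}(\ket{\phi^+}\ot\ket v)\|^2-3\|\Pi_{[1,1,1]}(\ket{\phi^-}\ot\ket v)\|^2$. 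The first norm is $\le\|\phi^+\|^2$ trivially; the second needs the key estimate $\|\Pi_{[1,1,1]}(\ket\xi\ot\ket v)\|^2\le\tfrac13\|\xi\|^2\|v\|^2$ for $\xi\in\Lambda^2\C^d$ (equality iff $v\perp\operatorname{supp}\xi$), which I would prove by expanding in the wedge basis, i.e.\ by bounding the norm on product inputs of the inclusion $\Lambda^3\C^d\hookrightarrow\Lambda^2\C^d\ot\C^d$. With $\|\phi^+\|^2+\|\phi^-\|^2=1$ this yields $p_{[2,1]}-2p_{[1,1,1]}\ge 0$ on all $[2,1]$-separable states, saturated by $\ket\phi=\tfrac1{\sqrt2}(\ket{01}-\ket{10})$ with $\ket v=\ket2$ [point $(0,\tfrac23,\tfrac13)$] and with $\ket v\in\operatorname{span}(\ket0,\ket1)$ [point $(0,1,0)$], together with any fully symmetric $\ket\psi$ [point $(1,0,0)$]; these are the vertices of $\mathcal T_{[2,1]}$, whose only nontrivial facet is $p_{[2,1]}-2p_{[1,1,1]}\ge 0$.

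Necessity and sufficiency for symmetric states then follows at once: such a state equals its own $U^{\ot 3}$- and $S_3$-twirl, which lies in $\mathcal T_\k$ iff the state is $\kSEP$ (both symmetrisations and convexity preserve $\kSEP$), i.e.\ iff all facet inequalities of $\mathcal T_\k$ hold, which are the listed criteria and their nonnegative combinations. I expect the $\k=[2,1]$ step to be the main obstacle: setting up the correct "immanant-type" bookkeeping across the $AB|C$ cut and proving the $\Lambda^3\hookrightarrow\Lambda^2\ot\C^d$ norm estimate cleanly, and then confirming in both cases that the convex hull has no extreme points beyond those exhibited — the inclusion $\mathcal T_\k\subseteq$ the claimed polytope is the easy direction once the inequalities are established, while the reverse inclusion needs only the explicit vertices above.
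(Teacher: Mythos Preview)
Your approach is correct and takes a genuinely different route from the paper's.

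The paper works on the dual (witness) side: for each $\k$ it computes $\alpha_\k(c_{(3)},c_{(2,1)},c_{(1,1,1)})$ as a closed piecewise-linear function of the coefficients, by using the unitary symmetry to project $W$ onto fixed product vectors. For $[2,1]$ it sets the single factor to $\ket{1}$ and finds $\operatorname{mineig}$ of $(\bra{1}\otimes I\otimes I)W(\ket{1}\otimes I\otimes I)$, diagonalising explicitly via the commuting pair $S=\eta_d((2,3))$ and $Q=\dyad{1}\otimes I+I\otimes\dyad{1}$. For $[1,1,1]$ it further rotates the second vector to $\sqrt{p}\,\ket{1}+\sqrt{1-p}\,\ket{2}$, projects again, diagonalises in a Pauli-like basis, and minimises over $p$. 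It then reads off the extremal rays of the cone $\{\alpha_\k\ge 0\}$.

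You work on the primal (state) side, determining the polytope $\mathcal T_\k$ of achievable $(p_\lambda)$ directly and then reading off its facets. For $\k=[1,1,1]$ the Gram-matrix/immanant parametrisation in $(q,r)$ is a genuinely cleaner shortcut: the first inequality is exactly Schur's inequality $\imm_{[2,1]}\ge 2\det$ specialised to unit-diagonal psd matrices, and the second, $q+6r\ge 0$, follows as you say by combining $r\ge -(q/3)^{3/2}$ (for $q\le 3/4$) with $\det G\ge 0$ (for $q\ge 3/4$). This makes the new immanant inequality transparently elementary, which is thematically apt for the paper. For $\k=[2,1]$, your symmetric/antisymmetric split of the bipartite factor together with the norm bound $\|\Pi_{(1,1,1)}(\xi\otimes v)\|^2\le\tfrac13\|\xi\|^2$ for $\xi\in\Lambda^2\C^d$ replaces the paper's explicit spectral bookkeeping by a single representation-theoretic estimate. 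Once the facet inequalities are established and the listed corner states are exhibited, equality of $\mathcal T_\k$ with the claimed triangle is automatic by convexity, so no further extremality argument is needed.

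The paper's method is more systematic and is the template that scales to Theorem~\ref{thm:exact_(1,n-1)_sep}; yours is more geometric and lets the immanant correspondence do real work in the fully separable case. One small imprecision worth tightening: the step ``$|G_{12}||G_{13}||G_{23}|\le q/3$'' is not a direct AM--GM; you need the intermediate $(q/3)^{3/2}\le q/3$, which uses $q\le 3$ from the $2\times 2$ principal minors. This is harmless but should be stated.
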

The proof is given in Appendix~\ref{app:Charact3partite}.
The first witness and $4p_{3} - p_{2,1} \geq 0$, a convex combination of the first two witnesses, are known and arise from immanant inequalities~\cite{MaassenSlides}.
The third one is to our knowledge new, although in principle it may be derived independently from the results in~\cite{Eggeling2001UnSymmEnt}.
In fact, it detects three-partite states with positive partial transpositions (PPT), while the other ones cannot. We thereby identify the distinguished one-parameter family of states
\begin{equation}\label{eq:3partPPT}
    \varrho_p = p\frac{\Pi_{(3)}}{\tr \Pi_{(3)} } + (1-p)\frac{\Pi_{(2,1)}}{\tr \Pi_{(2,1)} }.
\end{equation}
Using Proposition~\ref{prop:CharactWitN=3}, we verify that all such states are $[2,1]$-separable for $0\leq p\leq 1$, bound entangled for $1/4<p\leq 1/5$, and fully separable for $1/5<p\leq 1$.

\begin{table}[h!]
    \centering
\begin{tabular}{c | c c c c c | c | c c c c}
    \hline
    $W$ & $p_4$ & $p_{3,1}$ & $p_{2,2}$ & $p_{2,1^2}$ & $p_{1^4}$ &
    $d$
    & $\alpha_{1|3}$
    & $\alpha_{2|2}$
    & $\alpha_{1|1|2}$ & $\alpha_{1^4}^{PPT}$ 
    \\
    \hline
    $W_1$ & $9$ & $-1$ & 0 & 0 & 0
    & 2
    & \cellcolor{lightred}$-1$
    & \cellcolor{lightorange}$-1$
    & \cellcolor{lightorange}$-1$ & $0$  \\

    $W_2$ & $0$ & $9$ & $0$ & $-9$ & $0$ 
    & 3
    & \cellcolor{lightred}$-1$
    & \cellcolor{lightgreen}$-1/2$
    & \cellcolor{lightgreen}$-0.31$ & $0$  \\

    $W_3$ & $0$ & $0$ & $0$ & $1$ & $-9$
    & 4
    & \cellcolor{lightgreen}$-9/4$
    & \cellcolor{lightgreen}$-3/2$%
    & \cellcolor{lightgreen}$-1/2$ & $0$  \\

    $W_4$ & $4$ & $0$ & $-1$ & $0$ & $0$
    & 2
    & \cellcolor{lightgreen}$-1/2$
    & \cellcolor{lightorange}$-1$
    & \cellcolor{lightgreen}$-1/2$ & $0$  \\

    $W_5$ & $0$ & $0$ & $1$ & $0$ & $-4$
    & 4
    & \cellcolor{lightgreen}$-1$
    & \cellcolor{lightgreen}$-1/3$%
    & \cellcolor{lightgreen}$-1/6$ & $0$  \\
    $W_6-\epsilon_6\mathds{1}$ & $8$ & $0$ & $-4$ & $1$ & $-1$
    & $4$
    & \cellcolor{lightgreen}$-2$
    & \cellcolor{lightgreen}$-4$%
    & \cellcolor{lightgreen}$-2$ & \cellcolor{blue!25}$-4/7$ \\
    $W_7-\epsilon_7\mathds{1}$ & $12$ & $-1$ & $-3$ & $1$ & $0$
    & $3$
    & \cellcolor{lightgreen}$-2$
    & \cellcolor{lightgreen}$-3$%
    & \cellcolor{lightgreen}$-2$ & \cellcolor{blue!25}$-3/20$  \\
    $W_8-\epsilon_8\mathds{1}$ & $0$ & $2$ & $-3$ & $-1$ & $3$
    & $4$
    & \cellcolor{lightgreen}$-1$
    & \cellcolor{lightgreen}$-3$%
    & \cellcolor{lightgreen}$-1/2$ & \cellcolor{blue!25}$-9/86$ \\
    \hline
\end{tabular}
    \caption{{\bf Numerical characterization of four-party entanglement witnesses $W_i$} with unitary and permutation symmetry, defined by a linear combination of the projections $p_\lambda$ with the given coefficients. Witnesses $W_1$ to $W_5$ arise from immanant inequalities~\cite{MaassenSlides}. Witnesses $W_6$ to $W_9$ are to our knowledge new and give rise to previously unknown matrix inequalities. The coefficients $\alpha_{\kappa}$ are the minimum expectation values over states with separability partition $\kappa$, and $\alpha_{1^4}^{PPT}$ is the minimal expectation value over PPT states across all possible bipartitions (detection of PPT entanglement is marked in blue). Green cells indicate that only NPT states are detected, orange cells indicate that no NPT states are detected and red cells that $W_i+\alpha_\kappa$ is positive semidefinite. The precisions $\epsilon_6$, $\epsilon_7$ and $\epsilon_8$ and methodology are given in Appendix~\ref{app:charact4partite}.
    } 
    \label{tab:Main4Partite}
\end{table}

For larger number of parties, we approximate the optimization over separable states with the PPT relaxation. This framework allows us to distinguish $\kappa$-separability structures within seven-qubit systems. As a simple example, from the two-rowed partitions immanant inequality $d_{(4,3)}\per(M)\geq\imm_{(4,3)}(M)$ with $d_{(4,3)}=14$, we consider the seven-qubit full-separability witness
\begin{equation}
    W_7 = 14^2 \Pi_{(7)} - \Pi_{(4,3)}\,.
\end{equation}
By minimizing the expectation value $\langle \tr_1(\dyad{0}W_7)\rangle$ (where $W_7$ is trace-normalized) over PPT states across the desired partitions and numerical inspection on the results, we find that the associated witness $W_7'=W_7+\alpha \Pi_{(4,3)}$ with $\alpha \approx 24/\binom{33}{5}$ detects states not to be a convex combination of $\varrho_{1}\otimes\varrho_{2}\otimes\varrho_{3,4,5,6,7}$ and permutations thereof (i.e. $[5|1|1]$-separable); nor a convex combination of $\varrho_{1}\otimes\varrho_{2,3,4}\otimes\varrho_{5,6,7}$ and permutations thereof (i.e. $[3|3|1]$-separable).

\section{Many-body detection}
In many-body systems it is often particularly important to ensure that no single local system is separable from the rest. Certifying that this is not the case ensures that all local parties have access to multipartite teleportation~\cite{Krlsson1998MultiTele}, distributed quantum computing~\cite{OneWayQC_Briegel2001}, conference key agreement~\cite{Murta_2020ConfKeyAg} and quantum secret sharing~\cite{Hillery1999QSS}. Formally, local users have access to these schemes only if the global $n$-partite state can not be written as
\begin{equation}\label{eq:SemiSep}
   \varrho = \sum_ip_i\varrho_i\otimes\varrho_{1...n\setminus i}.
\end{equation}
States that decompose in this way are known as {\em semi-separable}~\cite{ReviewQEnt_Horo2009}, and specific methods to detect states outside of their set have been explored for certain system sizes~\cite{Kaszlikowski2008SemiSepWitsOptU,Lancien2015SemiSepSDP}. 
Here we use the symmetries of the criteria introduced, to find witnesses for states that are not of the form~\eqref{eq:SemiSep} for any local dimension and number of parties. For that we find analytically the exact value of $\alpha_{n-1|1}$:

\begin{theorem}\label{thm:exact_(1,n-1)_sep}
Let $W = \sum_{\lambda \vdash_d n} c_\lambda \Pi_\lambda$ with $c_\lambda\in\R$. The minimum expectation value of $W$ over $[n-1|1]$-separable states is given by
\begin{equation}
    \alpha_{n-1|1}
    =
    \min_{\substack{ \mu \vdash_d n-1 \\ \nu \sqsubseteq \mu}}
    \sum_{e_k \in \mathrm{AC}_d(\mu)} c_{\mu + e_k} \frac{\prod_{i=1}^{d-1} \nu_i - i - 1 -\mu_k + k}{\prod_{i = 1, i \neq k}^{d} \mu_i - i - \mu_k + k},
\end{equation}
where $\mathrm{AC}_d(\mu)$ is the set of addable corners $e_k$ (with row numbers denoted $k$) of the Young diagram $\mu$ such that $\ell(\mu + e_k) \leq d$, and the minimization is done over partitions $\mu=(\mu_1,\dotsc,\mu_{d})$ and $\nu=(\nu_1,\dotsc,\nu_{d-1})$ satisfying the interlacing condition $\nu \sqsubseteq \mu$, i.e. $\mu_1 \geq \nu_1 \geq \mu_2 \geq \nu_2 \geq \dotsc \geq \mu_{d-1} \geq \nu_{d-1} \geq \mu_d$.
\end{theorem}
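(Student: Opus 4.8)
The plan is to combine Observation~\ref{obs:KSepAlpha} with the branching rule for $U(d)$ irreps. By Observation~\ref{obs:KSepAlpha}, the minimum of $\tr(W\varrho')$ over $[n-1|1]$-separable states equals $\min_{\ket{\phi}\otimes\ket{v}}\bra{\phi}\bra{v}W\ket{\phi}\ket{v}$, where $\ket{\phi}\in(\C^d)^{\otimes(n-1)}$ is an arbitrary $(n-1)$-partite state on the first $n-1$ parties and $\ket{v}\in\C^d$ is a single-party state on the last one. Since $W=\sum_\lambda c_\lambda\Pi_\lambda$ is $U(d)^{\otimes n}$-invariant in the diagonal sense, I can use local unitary freedom to fix $\ket{v}=\ket{d}$ (the last basis vector) without loss of generality, and also use the $U(d)^{\otimes(n-1)}$-invariance to reduce $\ket{\phi}$ to a highest-weight vector of a chosen $U(d)$-irrep $\mathcal U_\mu$ with $\mu\vdash_d n-1$. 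Then $\bra{\phi}\bra{d}W\ket{\phi}\ket{d} = \sum_\lambda c_\lambda \,\|\Pi_\lambda(\ket{\phi}\ot\ket{d})\|^2$, so everything reduces to computing the squared norms of the components of $\ket{\phi}\ot\ket{d}$ in each $\mathcal U_\lambda\otimes\mathcal S_\lambda$ block of $(\C^d)^{\otimes n}$.

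The key computation is this branching overlap. Tensoring $\mathcal U_\mu$ by the defining representation $\C^d$ decomposes as $\bigoplus_{e_k\in\mathrm{AC}_d(\mu)}\mathcal U_{\mu+e_k}$ by the Pieri/branching rule, which explains why only the $\lambda=\mu+e_k$ terms appear. I would take $\ket{\phi}$ itself to range over a highest-weight line but more cleverly parametrize the minimum: the vector $\ket{\phi}$ inside $\mathcal U_\mu$ that is not highest weight but instead lives in a specific weight space — indexed by the interlacing partition $\nu\sqsubseteq\mu$ via Gelfand--Tsetlin patterns — is what makes the overlap nontrivial across several addable corners simultaneously. Concretely, restricting $U(d)\downarrow U(d-1)$ decomposes $\mathcal U_\mu = \bigoplus_{\nu\sqsubseteq\mu}\mathcal U_\nu$, and $\ket{d}$ transforms trivially under $U(d-1)$; so the relevant overlap of $\ket{\phi}\ot\ket{d}$ with $\mathcal U_{\mu+e_k}$ is governed by the GT branching coefficients, which are exactly the ratios of products $\prod_i(\nu_i-i-1-\mu_k+k)/\prod_{i\neq k}(\mu_i-i-\mu_k+k)$ appearing in the statement. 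I would derive these coefficients from the known formula for the matrix elements of $U(d)$ generators in the GT basis (or equivalently from isoscalar factors for $U(d)\supset U(d-1)$), specialized to the action that moves a box into row $k$.

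Once the overlap weights $w_k(\mu,\nu) = \prod_{i=1}^{d-1}(\nu_i-i-1-\mu_k+k)/\prod_{i\neq k}^{d}(\mu_i-i-\mu_k+k)$ are identified, the expectation value becomes $\sum_{e_k\in\mathrm{AC}_d(\mu)} c_{\mu+e_k}\, w_k(\mu,\nu)$ for the state parametrized by $(\mu,\nu)$, and minimizing over all pure product states on the $(n-1){+}1$ cut amounts to minimizing this finite expression over all valid $(\mu,\nu)$ — giving exactly the claimed formula. The step I expect to be the main obstacle is pinning down these GT overlap coefficients with the correct normalization and sign conventions, and verifying that the extremal $\ket{\phi}$ on the first $n-1$ parties can always be chosen to lie in a single $U(d)$-irrep $\mathcal U_\mu$ with a definite $U(d-1)$-weight $\nu$ — i.e. that convexity plus the residual symmetry group genuinely collapses the optimization to the GT-labeled family, with no loss from superpositions across different $\mu$. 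The trivial-$U(d-1)$-action of $\ket{d}$ is what should rule out cross terms, but making that argument airtight (including the claim that $\|\Pi_\lambda\cdot\|^2$ on such a vector depends only on $\mu$, $\nu$, and $k$) is the crux.
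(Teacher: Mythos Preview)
Your proposal follows essentially the same route as the paper: fix the single-party factor to $\ket{d}$ by unitary invariance, then analyse the overlap of $\ket{\phi}\otimes\ket{d}$ with each isotypic $\Pi_\lambda$ via Schur--Weyl on $(\C^d)^{\otimes(n-1)}$ and the Gelfand--Tsetlin branching $\U{d}\supset\U{d-1}$, identifying the squared overlaps with the reduced Wigner coefficients that appear in the formula.

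The one point worth flagging is your stated ``main obstacle'': whether the optimal $\ket{\phi}$ can be taken inside a single $\mathcal U_\mu$ with a definite $\U{d-1}$-label $\nu$, or whether superpositions across different $\mu$ or $\nu$ could beat it. The paper sidesteps this entirely by working with the operator
\[
X^\lambda := (\bra{d}\otimes I^{\otimes n-1})\,\Pi_\lambda\,(\ket{d}\otimes I^{\otimes n-1}),
\]
so that $\alpha_{n-1|1}=\operatorname{mineig}\sum_\lambda c_\lambda X^\lambda$. One then checks that $X^\lambda$ commutes with every $\pi\in S_{n-1}$ and with every $U^{\otimes(n-1)}$ for $U\in\U{d-1}$ (the stabilizer of $\ket{d}$). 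Schur--Weyl plus the subgroup-adapted property of the GT basis then force $\sum_\lambda c_\lambda X^\lambda$ to be \emph{diagonal} in the GT basis, with eigenvalues indexed exactly by pairs $(\mu,\nu)$ with $\nu\sqsubseteq\mu$. So the minimum eigenvalue is automatically one of the values in your finite list --- no extremality or convexity argument across $\mu$'s or $\nu$'s is needed. This operator-commutant viewpoint is the clean resolution of the worry you identified. Relatedly, note a small slip in your write-up: once $\ket{v}=\ket{d}$ is fixed, the residual diagonal unitary symmetry is only $\U{d-1}$, not $\U{d}$, so you cannot further rotate $\ket{\phi}$ to a $\U{d}$-highest-weight vector; the dependence on $\nu$ is genuine, as you yourself realize a few lines later.
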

The proof is given in Appendix~\ref{app:ThmExact}. 
This theorem provides a recipe to construct simple witnesses to detect many-body states with $\kappa\neq[n-1|1]$. A simple example is the following witness for arbitrary even number of qubits $n$:
\begin{equation}\label{eq:NQubitWitness}
    p_{(n,0)}-\frac{p_{(n/2,n/2)}}{D^2} + \frac{1}{2D^2} < 0 \implies\kappa(\varrho)\not\in[n-1|1]
\end{equation}
where $D := d_{(n/2,n/2)} = \binom{n}{n/2}-\binom{n}{n/2-1}$. This is because the matrix inequality $\per(M)-\imm_{(n-r,r)}(M)/d_{(n-r,r)}\geq 0$ holds for any positive semidefinite matrix $M$~\cite{pate1994immanantRank2,wanless2022liebImInqs}. In the Hilbert space representation, this inequality implies in particular that if $\varrho$ is fully separable, then $p_{(n,0)}-p_{(n/2,n/2)}/D^2\geq 0$. Theorem~\ref{thm:exact_(1,n-1)_sep} shows that for this witness, the smallest expectation value over $[n-1|1]$-separable states is $\alpha_{[n-1|1]}=-1/2D^2$ for qubits (namely with local dimension $d=2$), and therefore by adding the shift $1/2D^2$ Eq.~\eqref{eq:NQubitWitness} holds. The exact derivation of this witness, more general families, dimension-free witnesses, and families of witnesses for high-dimensional systems are given in Appendix~\ref{app:ThmExact}. Concerning the families of witnesses presented in this work, we emphasize that if the permanent-dominance conjecture is shown to be true (namely that the permanent is the largest normalized immanant on positive semidefinite matrices~\cite{LiebPermanent1966,wanless2022liebImInqs}), then any expression of the form $p_{(n)}-p_{\lambda}/d_\lambda^2$ is an entanglement witness and it can be lifted to detect $\kappa$-inseparability with Theorem~\ref{thm:exact_(1,n-1)_sep}. 

Although our main motivation concerning many-body systems is to detect states where no single subsystem is separable from the rest, Theorem~\ref{thm:exact_(1,n-1)_sep} can also be used for other bipartitions. For instance, if the system in hand is composed of an even number of subsystems $n=2t$ of dimension $d$ each, one can group them into $t$ subsystems of dimension $d^2$ each. Then Theorem~\ref{thm:exact_(1,n-1)_sep} provides families of criteria detecting $[n-2|2]$ entanglement in the global system, namely it detects states where no pairs of local subsystems are disentangled from the rest.

\section{Implementation}
\begin{table}[h!]
\centering
\begin{tabular}{l|c|c|c|c}
\hline
Registers & \multicolumn{2}{c|}{$\lambda$, $M_\lambda$, $T_\lambda$} & \multicolumn{2}{c}{$\lambda$, $T_\lambda$} \\ \hline
Method & \cite{EfficientQCirQST_Bacon2006,nguyen2023mixed,grinko2023gelfandtesetlin,burchardt2025hdschur} & \cite{burchardt2025hdschur} & \cite{Krovi2019efficienthigh, burchardt2025hdschur} & \cite{harrow2005thesis,larocca2025quantum}
\\
Depth & $\widetilde{O}(\min(n^5,nd^4))$ & $\widetilde{O}(n^{3.5})$ & $\widetilde{O}(n^{3.5})$ & $\widetilde{O}(n^{4})$ \\
Space & $\widetilde{O}(\min(n^2,d^2))$ & $\widetilde{O}(n^2)$ & $\widetilde{O}(n^2)$ & $\widetilde{O}(n^2)$ \\
\hline
\end{tabular}
\caption{{\bf Scaling of different methods for the weak Schur sampling for general $d$}. Here $\tilde{O}$ denotes polylogarithmic scaling. The left two columns implement the full quantum Schur Transform, which outputs all three registers $\lambda$, $M_\lambda$ and $T_\lambda$. The two known methods for this are BCH algorithm~\cite{EfficientQCirQST_Bacon2006,burchardt2025hdschur} and revised Krovi's algorithm~\cite{burchardt2025hdschur}. The two right columns avoid redundant information about $M_\lambda$ and obtain only the registers $\lambda$ and $T_\lambda$, and thus allow for simpler circuits: these are the original Krovi algorithm~\cite{Krovi2019efficienthigh} and Generalized Phase Estimation~\cite{harrow2005thesis}. Both Krovi and GPE approaches are based on Quantum Fourier Transform for the symmetric group. Finally, a recent weak Schur sampling method for qubits \cite{brahmachari2025optimalqubitpurificationunitary} achieves $\widetilde{O}(n)$ scaling.}
\label{tab:Scaling}
\end{table}
For numerical use of the witnesses found in this work, we provide a symbolic-algebraic SageMath code to generate the Young projectors $\Pi_\lambda$ onto the irreducible subspaces $\mathcal U_\lambda \otimes \mathcal S_\lambda$ of $\cdn$ in a .gz file, ready for load and use in Python language~\cite{RicoGithub}. These can be effectively generated and diagonalized to find $\alpha$-values for nontrivial system sizes. Experimentally, the entanglement criteria derived in this work rely on performing projective measurements onto Young projectors. This is a special case of generalized phase estimation known as {\em weak Schur sampling}, and can be done efficiently using a variety of techniques~\cite{EfficientQCirQST_Bacon2006,harrow2005thesis,Krovi2019efficienthigh,cervero2023weakschursamplinglogarithmic,grinko2023gelfandtesetlin,cervero2024memory,burchardt2025hdschur,brahmachari2025optimalqubitpurificationunitary}. In standard methods, the key idea is to diagonalize the Hilbert space through the unitary Schur transform $U_{\text{ST}}$ as
\begin{equation}
    U_{\text{ST}}\ket{i_1}
    ...
    \ket{i_n} = \sum_{\lambda, M_\lambda, T_\lambda} c_{\lambda,M_\lambda,T_\lambda}\ket{\lambda}
    \ket{M_\lambda}
    \ket{T_\lambda}
\end{equation}
where $c_{\lambda,M_\lambda,T_\lambda}$ are products of the Clebsch-Gordan coefficients and $\lambda\vdash_d n$, $\ket{M_\lambda}\in\mathcal{U}_\lambda$ and $\ket{T_\lambda}\in\mathcal{S}_\lambda$, which can be done efficiently on quantum circuits~\cite{EfficientQCirQST_Bacon2006,harrow2005thesis,Krovi2019efficienthigh,burchardt2025hdschur}. Then the projection $\tr(\Pi_\lambda\varrho)$ of a state $\varrho$ onto the irreducible subspace $\mathcal{U}_\lambda\otimes\mathcal{S}_\lambda$ is obtained by measuring the first register $\ket{\lambda}$ onto the computational basis of $\C^{p(n,d)}$, where $p(n,d)$ is the number of possible partitions $\lambda \vdash_d n$. Since we are interested in obtaining the label $\lambda$, this procedure can be further optimized avoiding a full diagonalization of the Hilbert space (see Table~\ref{tab:Scaling}).

\section{New matrix inequalities}
Up to date, several works (including this contribution) have focused on using immanant inequalities to detect entanglement~\cite{MaassenSlides,Huber2021MatrixFO,long,TP_Rico24}. However, immanant inequalities are an intensively studied field of mathematics and interesting by its own \cite{LiebPermanent1966,haiman1993hecke,pate1992immanantHook,pate1999tensorPerDomN13}. Here we use the correspondence in the converse direction to find new such inequalities.
For that we need a following two-sided formulation of the results of \cite{MaassenSlides} to both directions and rank constraints:
\begin{proposition}\label{prop:WitnessImmanant}
    Let $\{\imm_\lambda\}_\lambda$ be immanants and let $\{\Pi_\lambda\}_\lambda$ be Young projectors. The
    inequality
    \begin{equation}\label{eq:ImInProp}
        \sum_{\lambda\vdash n}a_\lambda\imm_\lambda(G)\geq 0
    \end{equation}
    holds for all $n\times n$ positive semidefinite matrices $G$ of rank $r$, if and only if
    \begin{equation}\label{eq:YoungBPProp}
        \sum_{\lambda\vdash n}\frac{a_\lambda}{d_\lambda}\tr(\Pi_\lambda\varrho)\geq 0
    \end{equation}
    for all $n$-partite separable states in local dimension $r$.
\end{proposition}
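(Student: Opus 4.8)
The plan is to reduce both sides to a single identity expressing the expectation value of a Young projector in a pure product state as an immanant of the corresponding Gram matrix, and then to pass to general separable states by convexity and to the rank constraint by a density argument. First I would recall that $\Pi_\lambda$ is the image, under the permutation representation $\sigma \mapsto V_\sigma$ of $S_n$ on $\cdn$, of the central idempotent of $\C[S_n]$ associated with $\lambda$, so that $\Pi_\lambda = \tfrac{d_\lambda}{n!}\sum_{\sigma\in S_n}\chi_\lambda(\sigma)\,V_\sigma$, using that the characters of $S_n$ are real. Then, for a pure product state $\ket{\psi} = \ket{v_1}\ot\cdots\ot\ket{v_n}$ with each $v_i\in\C^r$, a direct computation gives $\bra{\psi}V_\sigma\ket{\psi} = \prod_{i=1}^n\braket{v_i}{v_{\sigma(i)}}$; writing $G$ for the Gram matrix $G_{ij}=\braket{v_i}{v_j}$ and summing against $\chi_\lambda$ over $S_n$ reproduces the immanant, yielding the key identity
\[
\bra{\psi}\Pi_\lambda\ket{\psi} = \frac{d_\lambda}{n!}\,\imm_\lambda(G), \qquad\text{so}\qquad \sum_{\lambda\vdash n}\frac{a_\lambda}{d_\lambda}\bra{\psi}\Pi_\lambda\ket{\psi} = \frac{1}{n!}\sum_{\lambda\vdash n}a_\lambda\,\imm_\lambda(G).
\]
I would pair this with the standard dictionary between $n\times n$ positive semidefinite matrices of rank at most $r$ and Gram matrices of $n$-tuples of vectors in $\C^r$: any such matrix factors as $G=B^\dagger B$ with $B$ having $r$ rows, whose columns are the desired vectors, and conversely Gram matrices of vectors in $\C^r$ have rank at most $r$.

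For the direction \eqref{eq:YoungBPProp}$\,\Rightarrow\,$\eqref{eq:ImInProp}, I would apply the hypothesis to pure product states with $v_i\in\C^r$ and invoke the identity: this forces $\sum_\lambda a_\lambda\imm_\lambda(G)\ge 0$ for every Gram matrix $G$ of such a tuple, hence for every positive semidefinite $G$ of rank at most $r$, in particular of rank exactly $r$. For the converse, I would first upgrade the hypothesis from rank exactly $r$ to rank at most $r$ using continuity of the immanants together with density of the rank-$r$ matrices in the rank-$\le r$ cone; then, writing an arbitrary separable state in local dimension $r$ as $\varrho = \sum_k q_k\dyad{\psi_k}$ with $\ket{\psi_k}$ pure product states and $q_k\ge 0$, linearity of $\tr(\Pi_\lambda\,\cdot)$ and the identity give $\sum_\lambda\tfrac{a_\lambda}{d_\lambda}\tr(\Pi_\lambda\varrho) = \tfrac{1}{n!}\sum_k q_k\sum_\lambda a_\lambda\imm_\lambda(G^k)\ge 0$, each inner sum being non-negative.

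The main obstacle is establishing the immanant--projector identity cleanly: one must fix the convention for $V_\sigma$ (left versus right action on tensor factors), check that realness of $\chi_\lambda$ lets one drop the complex conjugation in the central idempotent, and confirm that the $S_n$-isotypic projector coincides with the Young projector $\Pi_\lambda$ of the Schur--Weyl decomposition \eqref{eq:SW_duality}, so that terms with $\ell(\lambda)>r$ vanish automatically on both sides --- consistent with $\imm_\lambda(G)=0$ whenever $\rank G<\ell(\lambda)$. The remaining ingredients --- convex decomposition of separable states into pure products, linearity, and the density argument bridging ``rank $r$'' and ``rank $\le r$'' --- are routine.
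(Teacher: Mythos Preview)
Your proposal is correct and follows essentially the same route as the paper: both establish the key identity $\imm_\lambda(G)=\tfrac{n!}{d_\lambda}\tr(\Pi_\lambda\,\dyad{v_1}\ot\cdots\ot\dyad{v_n})$ from the central-idempotent formula $\Pi_\lambda=\tfrac{d_\lambda}{n!}\sum_\sigma\chi_\lambda(\sigma)V_\sigma$, identify rank-$r$ PSD matrices with Gram matrices of $n$-tuples in $\C^r$, and then conclude by convexity over separable states. You are in fact slightly more careful than the paper in explicitly handling the ``rank exactly $r$'' versus ``rank $\le r$'' distinction via density and continuity, which the paper leaves implicit.
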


This provides a mapping from the space $\C^n\times\C^n$ of $n\times n$ matrices to the $n$-qudit Hilbert space $\cdn$ of some homogeneous local dimension $d$, and vice versa. To our knowledge, here we use this relation by first time to we find new $n$-qudit entanglement witnesses, which provide new matrix inequalities. As an example, the only three-partite witnesses known from immanant inequalities were $\Pi_{(3)}-\Pi_{(2,1)}/4$, $\Pi_{(2,1)} / 4-\Pi_{(1,1,1)}$ and linear combinations thereof, which we have shown to be decomposable. Yet, we found a linearly independent witness $3\Pi_{(3)}-\Pi_{(2,1)}+\Pi_{(1,1,1)}$. Besides detecting bound entanglement, this additional witness provides the following inequality for positive semidefinite $3\times 3$ matrices,
\begin{equation}
    3\per(G) - 2\imm_{21}(G) + \det(G) \geq 0\,.
\end{equation}
Since these three witnesses fully characterize the set of three-qudit block-positive operators given as linear combinations of projectors $\Pi_\lambda$, it follows from Proposition~\ref{prop:WitnessImmanant} that the three corresponding inequalities constitute a complete, analytical characterization for $3\times 3$ matrices. Numerically, we obtain a similar characterization for $4\times 4$ matrices arising from Table~\ref{tab:Main4Partite}.

\bigskip 

\section{Conclusions}

We have developed a method to detect separability partitions $\kappa$, by projecting onto invariant subspaces of the $n$-qudit Hilbert space. By exploiting the symmetries of separability structures, we have been able to analytically fully characterize the entanglement of three qudits with unitary and permutation symmetries, and numerically characterize the case of four qudits. We have found and characterized a rich structure of $\kappa$-separability, genuinely multipartite entanglement and bound entanglement for these system sizes, with few real parameters. The symmetry of the method allows us to compute the criteria introduced for larger system sizes. In particular, we provide an analytical criterion to detect many-body states where no single party is separable to the rest, which are of particular interest in distributed quantum protocols. Following from this result, we provide ready-to-use explicit families tailored to explicit local dimensions. The method introduced can be implemented efficiently in a quantum computer through weak Schur sampling. Besides Physics, our results impact the long-studied theory of matrix immanants in mathematics. We explicitly provide new immanant inequalities, characterize their set for matrices of size $3$ and $4$, and introduce rank-dependent inequalities.


\bigskip

{\em Acknowledgements.} 
A. R. acknowledges financial support from Spanish MICIN (projects: PID2022:141283NBI00;139099NBI00) with the support of FEDER funds, the Spanish Goverment with funding from European Union NextGenerationEU (PRTR-C17.I1), the Generalitat de Catalunya, the Ministry for Digital Transformation and of Civil Service of the Spanish Government through the QUANTUM ENIA project -Quantum Spain Project- through the Recovery, Transformation and Resilience Plan NextGeneration EU within the framework of the Digital Spain 2026 Agenda. D. G. is supported by NWO grant NGF.1623.23.025 (“Qudits in theory and experiment”).
L. H. Z acknowledges support from the Future Talent Guest Stay programme at Technische Universit\"{a}t Darmstadt, during which time this work was initiated.


\appendix
\begin{widetext}

\section{Schur-Weyl duality for two qubits}
Let us illustrate the key components of this method with a simple example, namely the decomposition of the two-qubit space $\C^2\otimes \C^2$. This four-dimensional space decomposes into the three-dimensional triplet subspace, $U\otimes S_{(2,0)}=\text{span}\{\ket{00},\ket{11},\ket{\psi^+}=(\ket{01}+\ket{10})/\sqrt{2}\}$, and the one-dimensional singlet subspace, $U\otimes S_{(1,1)}=\text{span}\{\ket{\psi^-}=(\ket{01}-\ket{10})/\sqrt{2}\}$. For two qubits, the Schur transform is the unitary matrix mapping the computational basis $\{\ket{00},\ket{11},\ket{10},\ket{01}\}$ to the basis $\{\ket{00},\ket{11},\ket{\psi^+},\ket{\psi^-}\}$. This change of basis allows to perform a projective measurement onto the triplet space, given by a projector $\Pi_{(2,0)}=\dyad{00}+\dyad{11}+\dyad{\psi^+}$, and the singlet space, given by a projector $\Pi_{(1,1)}=\dyad{\psi^-}$.

\section{Characterization of three-partite symmetric witnesses}\label{app:Charact3partite}
In this appendix, we fully characterize all tripartite $\k$-separable witnesses of the form $W = c_{(3)}\Pi_{(3)}+c_{(2,1)}\Pi_{(2,1)}+c_{(1,1,1)}\Pi_{(1,1,1)}$ with $c_\lambda \in \mathbb{R}$.
It will be useful to write the projectors $\Pi_\lambda$ in terms of the $(\mathbb{C}^{d})^{\otimes 3}$ permutation operators $\eta_d(\pi)$ that act on tripartite states as $\eta_d(\pi)(\ket{\psi_1}\otimes\ket{\psi_2}\otimes\ket{\psi_3}) = \ket{\psi_{\pi^{-1}(1)}}\otimes\ket{\psi_{\pi^{-1}(2)}}\otimes\ket{\psi_{\pi^{-1}(3)}}$, where $\pi \in S_3$ is a permutation of three objects.
With respect to these permutation operators, we have
\begin{equation}\label{eq:tripartite-projectors}
\begin{aligned}
    \Pi_{(3)} &= \frac{1}{6}\eta_d(\id) + \frac{1}{6}\eta_d((1,2)) + \frac{1}{6}\eta_d((1,3)) + \frac{1}{6}\eta_d((2,3)) + \frac{1}{6}\eta_d((1,2,3)) + \frac{1}{6}\eta_d((1,3,2)) \\
    \Pi_{(2,1)} &= \frac{2}{3}\eta_d(\id) - \frac{1}{3}\eta_d((1,2,3)) - \frac{1}{3}\eta_d((1,3,2)) \\
    \Pi_{(1,1,1)} &= \frac{1}{6}\eta_d(\id) - \frac{1}{6}\eta_d((1,2)) - \frac{1}{6}\eta_d((1,3)) - \frac{1}{6}\eta_d((2,3)) + \frac{1}{6}\eta_d((1,2,3)) + \frac{1}{6}\eta_d((1,3,2)).
\end{aligned}
\end{equation}
In what follows, we will derive the set $\mathcal{W}_{\k} = \{W : \forall\varrho \in \kSEP : \tr(\varrho W) \geq 0 \}$ of $W$ that are positive on all $\k$-separable states.
Notice that $\mathcal{W}_{\k}$ is a convex cone, since it can be easily verified that $\forall p,q \geq 0 \land W,W'\in\mathcal{W}_{\k} : p W + qW' \in \mathcal{W}_{\k}$. 
Non-positive-semidefinite members of this set are witnesses of $k$-separability.

\begin{figure}[h]
    \includegraphics[width=.75\columnwidth]{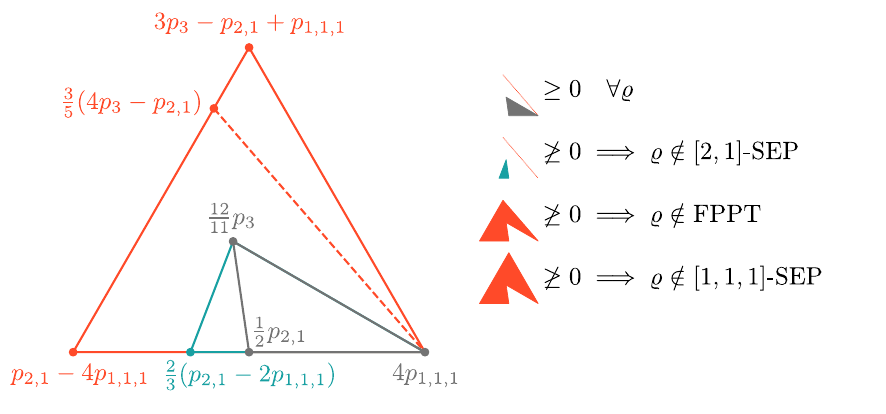}
    
    \caption{
    \label{fig:WitSimp}
    We fully characterize all tripartite witnesses of the form $\sum_{\lambda \vdash_d n} c_\lambda p_\lambda < 0 \implies \varrho \notin \kSEP$.
    Up to normalization, they are all given by $q_1(3p_3-p_{2,1}+p_{1,1,1}) + q_2 (p_{2,1}-4p_{1,1,1}) + q_3 4p_{(1,1,1)}$ for $q_k \geq 0$, plotted here as a simplex.
    A negative value in the corresponding region implies $\k$-inseparability.
    Here, the dashed triangular region formed by the vertices $p_{2,1}-4p_{1,1,1}$, $\propto 4p_3 - p_{2,1}$, and $\propto p_{1,1,1}$ are known from immanant inequalities, but they do not detect bound entanglement, i.e., they are positive for all fully positive partial transpose (PPT) states $\varrho : \varrho^{T_1},\varrho^{T_2},\varrho^{T_3} \geq 0$.
    }
\end{figure}

\subsection{$[2,1]$-separability and genuine multipartite entanglement}
For tripartite systems, states inseparable over the $[2,1]$ partition are genuinely multipartite entangled (GME) in that they cannot be written as a mixture of biseparable states, since $[2,1]$ is the only nontrivial bipartition of three parties.
Using $W = U^{\dag \otimes 3} W U^{\otimes 3}$ for all unitaries $U$,
\begin{equation}
\begin{aligned}
    \alpha_{[2,1]} &= \min_{\ket{\psi_1},\ket{\psi_{23}}}
    \left({\bra{\psi_1}\otimes\bra{\psi_{23}}}\right)
    W
    \left({\ket{\psi_1}\otimes\ket{\psi_{23}}}\right) \\
    &= \min_{\ket{\psi_1}}\min_{\ket{\psi_{23}}}
    \bra{\psi_{23}}U^{\dagger\otimes2}
    \left({\bra{\psi_1}U^\dag\otimes\one\otimes\one}\right)
    W
    \left({U\ket{\psi_1}\otimes\one\otimes\one}\right)
    U^{\otimes2}\ket{\psi_{23}} \\
    &= \min_{\ket{\psi_{23}}}
    \bra{\psi_{23}}
    \left({\bra{1}\otimes\one\otimes\one}\right)
    W
    \underbrace{\left({\ket{1}\otimes\one\otimes\one}\right)}_{\eqqcolon \mathbb{P}_1}
    \ket{\psi_{23}} \\
    &= \operatorname{mineig}[\mathbb{P}_1^\dag W \mathbb{P}_1],
\end{aligned}
\end{equation}
where in the second step we chose $U$ such that $U\ket{\psi_1} = \ket{1}$ for a chosen orthonomal basis $\{\ket{n}\}_{n=1}^d$.

By definition of $\mathbb{P}_1$, $\mathbb{P}_1(\ket{n}\otimes A \otimes B) = \delta_{n,1}A \otimes B$. Consider now the action of $\mathbb{P}_1$ on each $\eta_d(\pi)$. First, it should be obvious that $\mathbb{P}_1^\dag \eta_d(\id) \mathbb{P}_1 = \one\otimes\one$. For the rest,
\begin{equation}\label{eq:GME-projected-witness}
\begin{aligned}
    \mathbb{P}_1^\dag \eta_d((1,2)) \mathbb{P}_1 &= \mathbb{P}_1^\dag \eta_d((1,2)) \sum_{n=1}^d \ketbra{1n}{n} \otimes \one
    = \mathbb{P}_1^\dag \sum_{n=1}^d \ketbra{n1}{n} \otimes \one =  \dyad{1}\otimes\one, \\
    \mathbb{P}_1^\dag \eta_d((1,3)) \mathbb{P}_1 &= \mathbb{P}_1^\dag \eta_d((1,2)) \sum_{n=1}^d \ket{1} \otimes \one \otimes \dyad{n}
    = \mathbb{P}_1^\dag \sum_{n=1}^d \ket{n} \otimes \one \otimes \ketbra{1}{n} =  \one\otimes\dyad{1}, \\
    \mathbb{P}_1^\dag \eta_d((2,3)) \mathbb{P}_1 &= \mathbb{P}_1^\dag \ket{1} \otimes \underbrace{\eta_d((2,3))}_{\hspace{-3em}\text{bipartite swap }S\hspace{-3em}}
    = S, \\
    \mathbb{P}_1^\dag \eta_d((1,2,3)) \mathbb{P}_1 &=
    \mathbb{P}_1^\dag  \eta_d((1,2,3)) \sum_{n,m=1}^d \ketbra{1nm}{nm} = \mathbb{P}_1^\dag \sum_{n,m=1}^d \ketbra{m1n}{nm}
    = \sum_{n=1}^d \ketbra{1n}{n1} = S\left(\one\otimes\dyad{1}\right),\\
    \mathbb{P}_1^\dag \eta_d((1,3,2)) \mathbb{P}_1 &=
    \mathbb{P}_1^\dag  \eta_d((1,3,2)) \sum_{n,m=1}^d \ketbra{1nm}{nm} = \mathbb{P}_1^\dag \sum_{n,m=1}^d \ketbra{nm1}{nm}
    = \sum_{m=1}^d \ketbra{m1}{1m} = S\left(\dyad{1}\otimes \one\right).
\end{aligned}
\end{equation}
Next, let $Q \coloneqq \dyad{1}\otimes\one + \one\otimes\dyad{1}$. Then,
\begin{equation}
\begin{aligned}
    \mathbb{P}_1^\dag \left(\eta_d((1,2,3)) + \eta_d((1,3,2))\right) \mathbb{P}_1 &= SQ, \\
    \mathbb{P}_1^\dag \left(\eta_d((1,2)) + \eta_d((1,3))  + \eta_d((2,3))\right) \mathbb{P}_1 &= Q + S,
\end{aligned}
\end{equation}
with which we have $6\mathbb{P}_1^\dag\Pi_{(3)}\mathbb{P}_1 = \one\otimes\one + S+Q+SQ$, $6\mathbb{P}_1^\dag\Pi_{(2,1)}\mathbb{P}_1 = 4\one\otimes\one - 2SQ$, and $6\mathbb{P}_1^\dag\Pi_{(1,1,1)}\mathbb{P}_1 = \one\otimes\one - S - Q+SQ$.

Finally, from $SQS = \one\otimes\dyad{1} + \dyad{1}\otimes\one = Q$, we can see that $S$ and $Q$ commutes and thus can be simultaneously diagonalised, and we simply guess that the eigenstates are $\{\ket{nn}\}_{n=1}^d$ and $\{\ket{\pm_{nm}} \coloneqq (\ket{nm} \pm \ket{mn})/\sqrt{2}\}_{n=1,m=n+1}^{d-1,d}$, for which there are $\sum_{n'=1}^d + \sum_{n=1}^{d-1}\sum_{m=n+1}^d = d^2$ orthogonal eigenstates, so those are the complete set of eigenstates.
We can check directly that they are eigenstates of $S$ and $Q$ as
\begin{equation}
\begin{aligned}
    S\ket{nn} &= \ket{nn} & S\ket{\pm_{nm}} &= \pm \ket{\pm_{nm}} \\
    Q\ket{nn} &= 2\delta_{n,1}\ket{nn} & Q\ket{\pm_{nm}} &= \delta_{n,1}\ket{\pm_{nm}}.
\end{aligned}
\end{equation}
For $Q\ket{\pm_{nm}}$, keep in mind that $n\geq1$ and thus $m \geq n+1 \geq 2$, so $\left(\dyad{1}\otimes\one\right)\ket{nm} = \delta_{n,1}\ket{nm}$ but $\left(\dyad{1}\otimes\one\right)\ket{mn} = \delta_{m,1}\ket{mn} = 0$.

With these, we can therefore easily derive that
\begin{equation}
\begin{aligned}
    &\operatorname{mineig}\left[
        c_{(3)}\mathbb{P}_1^\dag\Pi_{(3)}\mathbb{P}_1 +
        c_{(2,1)}\mathbb{P}_1^\dag\Pi_{(2,1)}\mathbb{P}_1 +
        c_{(1,1,1)}\mathbb{P}_1^\dag\Pi_{(1,1,1)}\mathbb{P}_1
    \right] \\
    &\quad{}={}\frac{1}{6}\min\bigg[\Big\{
        c_{(3)} \left(
            1 + 1 + 2\delta_{n,1} + 2\delta_{n,1}
        \right) + c_{(2,1)}\left(
            4 - 4 \delta_{n,1}
        \right) + c_{(1,1,1)} \left(
            1 - 1 - 2\delta_{n,1} + 2\delta_{n,1}
        \right)
    \Big\}_{n=1}^{d} \\
    &\quad\qquad\qquad\qquad {}\cup{}\Big\{
        c_{(3)} \left(
            1 \pm 1 + \delta_{n,1} \pm \delta_{n,1}
        \right) + c_{(2,1)}\left(
            4 \mp 2 \delta_{n,1}
        \right) + c_{(1,1,1)} \left(
            1 \mp 1 - \delta_{n,1} \pm \delta_{n,1}
        \right)
    \Big\}_{n=1}^{d-1}\bigg] \\
    &\quad{}={}\frac{1}{6}\min\bigg[\Big\{
        2c_{(3)} \left(
            1 + 2\delta_{n,1}
        \right) + 4c_{(2,1)}\left(
            1 - \delta_{n,1}
        \right)
    \Big\}_{n=1}^{d}\cup\Big\{
        2c_{(3)} \left(
            1 + \delta_{n,1}
        \right) + 2c_{(2,1)}\left(
            2 - \delta_{n,1}
        \right) \Big\}_{n=1}^{d-1} \\
    &\quad\qquad\qquad\qquad {}\cup{}\Big\{
        2c_{(2,1)}\left(
            2 + \delta_{n,1}
        \right) + 2c_{(1,1,1)} \left(
            1 - \delta_{n,1}
        \right)
    \Big\}_{n=1}^{d-1}\bigg] \\
    &\quad{}={}\frac{1}{6}\min\bigg[
        \Big\{
        \underbrace{\min\{6c_{(3)},2c_{(3)} + 4c_{(2,1)}\} \cup
        \min\{4c_{(3)}+2c_{(2,1)},6c_{(2,1)}\}}_{ = 6\min\{c_{(3)},c_{(2,1)}\}}
        \Big\}
    \cup
        \Big\{
            \underbrace{\min\{
                2c_{(3)} + 4c_{(2,1)},
                4c_{(2,1)} + 2c_{(1,1,1)}
            \}}_{\text{if }d>2\text{; first term also appears in front}}
        \Big\}
    \bigg] \\
    &\quad{}={} \min\bigg\{
        c_{(3)},c_{(2,1)},
        \underbrace{
            \frac{2}{3}c_{(2,1)} + \frac{1}{3}c_{(1,1,1)}
        }_{\text{if }d>2}
    \bigg\},
\end{aligned}
\end{equation}
which completes our proof.

Recall that we are interested in the set $\mathcal{W}_{\k} = \{W : \forall \varrho \in \kSEP : \tr(\varrho W) \geq 0 \} = \{W : \alpha_{\k} \geq 0 \}$ that is positive on all $\k$-separable states.
From our expressions for $\alpha_{[2,1]}$, this is $W \in \mathcal{W}_{[2,1]} \iff \alpha_{[2,1]} \geq 0 \iff \{
    c_{(3)} \geq 0 \land c_{(2,1)} \geq 0 \land 2c_{(2,1)} + c_{(1,1,1)} \geq 0
\}$.
Hence, $\mathcal{W}_3$ is the intersection of the cones $c_{(3)} \geq 0$, $c_{(2,1)} \geq 0$, and $2c_{(2,1)} + c_{(1,1,1)} \geq 0$. The extremal rays occur at
\begin{equation}
\begin{gathered}
    0 = c_{(3)} = c_{(2,1)} \leq 2c_{(2,1)} + c_{(1,1,1)} \\
    0 = c_{(2,1)} = 2c_{(2,1)} + c_{(1,1,1)} \leq c_{(3)} \\
    0 = c_{(3)} = 2c_{(2,1)} + c_{(1,1,1)} \leq c_{(2,1)}.
\end{gathered}
\end{equation}
The first condition gives the ray $c_{(1,1,1)}\Pi_{111} : c_{(1,1,1)} \geq 0$, the second gives $c_{(3)}\Pi_3 : c_{(3)} \geq 0$, while the third gives $c_{(2,1)}(\Pi_{21} - 2\Pi_{111}): c_{(2,1)} \geq 0$. Therefore, $\mathcal{W}_{[2,1]} = \mathcal{C}[\{\Pi_{(3)},\Pi_{(1,1,1)},\Pi_{(2,1)}-2\Pi_{(1,1,1)}\}]$, where $\mathcal{C}$ is the convex cone generated from all positive convex combinations of its arguments.

\subsection{$[1,1,1]$-separability or full separability}
We start in a similar manner to the GME bound. By Observation~\ref{obs:KSepAlpha} and $\forall U : W = U^{\dag \otimes 3} W U^{\otimes 3}$, we have that
\begin{equation}
\begin{aligned}
    \alpha_{[1,1,1]} &= \min_{\ket{\psi_1},\ket{\psi_{2}},\ket{\psi_{3}}}
    \left({\bra{\psi_1}\otimes\bra{\psi_{2}}\otimes\bra{\psi_{3}}}\right)
    W
    \left({\ket{\psi_1}\otimes\ket{\psi_{2}}\otimes\ket{\psi_{3}}}\right) \\
    &= \min_{\ket{\psi_1},\ket{\psi_{2}},\ket{\psi_{3}}}
    \left(\bra{\psi_{2}}U^\dagger\otimes\bra{\psi_{3}}U^\dagger\right)
    \left({\bra{\psi_1}U^\dag\otimes\one\otimes\one}\right)
    W
    \left({U\ket{\psi_1}\otimes\one\otimes\one}\right)
    \left(U\ket{\psi_2}\otimes U\ket{\psi_3}\right) \\
    &= \min_{\ket{\psi_{2}},\ket{\psi_{3}}}
    \left({\bra{\psi_{2}}\otimes\bra{\psi_{3}}}\right)
    \left({\bra{1}\otimes\one\otimes\one}\right)
    W
    \underbrace{\left({\ket{1}\otimes\one\otimes\one}\right)}_{\eqqcolon \mathbb{P}_1}
    \left(\ket{\psi_2}\otimes\ket{\psi_3}\right) \\
    &= \min_{\ket{\psi_{2}},\ket{\psi_{3}}}
    \left({\bra{\psi_{2}}V^\dagger\otimes\bra{\psi_{3}}}V^\dagger\right)
    \mathbb{P}_1^\dag
    \left(V^\dagger\otimes\one\otimes\one\right)
    W
    \left(V\otimes\one\otimes\one\right)
    \mathbb{P}_1
    \left(V\ket{\psi_2}\otimes V\ket{\psi_3}\right)
\end{aligned}
\end{equation}
Now, choose the unitary $V$ that rotates the two-dimensional subspace spanned by $\{\ket{1},\ket{\psi_2}\}$ to the two-dimensional subspace spanned by $\{\ket{1},\ket{2}\}$, such that $V\ket{1} = \ket{1}$ and $V\ket{\psi_2} = \sqrt{p}\ket{1} + \sqrt{1-p}\ket{2}$ for some $0 \leq p \leq 1$. As such,
\begin{equation}
\begin{aligned}
    \alpha_{[1,1,1]}
    &= \min_{\ket{\psi_{3}}}\min_{0\leq p \leq 1}
    \Big[\left(\sqrt{p}\bra{1} + \sqrt{1-p}\bra{2}\right)\otimes\bra{\psi_3}\Big]
    \mathbb{P}_1^\dag W \mathbb{P}_1
    \Big[\underbrace{\left(\sqrt{p}\ket{1} + \sqrt{1-p}\ket{2}\right)}_{\coloneqq\ket{\phi_p}}\otimes\ket{\psi_3}\Big] \\
    &= \min_{0\leq p \leq 1}\min_{\ket{\psi_{3}}}
    \bra{\psi_3}\left(\bra{\phi_p}\otimes\one\right)
    \mathbb{P}_1^\dag W \mathbb{P}_1
    \underbrace{\left(\ket{\phi_p}\otimes\one\right)}_{\coloneqq\mathbb{P}_p}\ket{\psi_3} \\
    &= \min_{0\leq p \leq 1}\operatorname{mineig}\left[\mathbb{P}_p^\dag \mathbb{P}_1^\dag W \mathbb{P}_1 \mathbb{P}_p\right].
\end{aligned}
\end{equation}
We clearly have $\mathbb{P}_p^\dag\mathbb{P}_1^\dag \eta_d(\id) \mathbb{P}_1\mathbb{P}_p = \one$, and acting $\mathbb{P}_p$ upon Eq.~\eqref{eq:GME-projected-witness} gives
\begin{equation}
\begin{aligned}
    \mathbb{P}_p^\dag\mathbb{P}_1^\dag \eta_d((1,2)) \mathbb{P}_1\mathbb{P}_p &= p\one, \\
    \mathbb{P}_p^\dag\mathbb{P}_1^\dag \eta_d((1,3)) \mathbb{P}_1\mathbb{P}_p &= \sum_{n=1}^d \mathbb{P}_p^\dagger\left(
        \one\otimes\ketbra{1}{1}
    \right)\ket{\phi_p}\otimes\ketbra{n}{n} = \ketbra{1}{1}, \\
    \mathbb{P}_p^\dag\mathbb{P}_1^\dag \eta_d((2,3)) \mathbb{P}_1\mathbb{P}_p &= \sum_{n=1}^d \mathbb{P}_p^\dagger\ket{n}\otimes\ketbra{\phi_p}{n} = \ketbra{\phi_p}{\phi_p}, \\
    \mathbb{P}_p^\dag\mathbb{P}_1^\dag \eta_d((1,2,3)) \mathbb{P}_1\mathbb{P}_p &= \mathbb{P}_p^\dagger S\ket{\phi_p}\otimes\ketbra{1}{1} = \sqrt{p}\ketbra{\phi_p}{1},\\
    \mathbb{P}_p^\dag\mathbb{P}_1^\dag \eta_d((1,3,2)) \mathbb{P}_1\mathbb{P}_p &= \sqrt{p}\mathbb{P}_p^\dagger S \ket{1}\otimes\one =
    \sqrt{p}\ketbra{1}{\phi_p}.
\end{aligned}
\end{equation}
From this, we find that
\begin{equation}
\begin{aligned}
    \mathbb{P}_{p}^\dagger\mathbb{P}_{1}^\dagger \Pi_{3} \mathbb{P}_{1}\mathbb{P}_p &= \frac{1}{6}\left[
        (1+p)\one + \ketbra{1}{1} + \ketbra{\phi_p}{\phi_p} + \sqrt{p}\left( \ketbra{\phi_p}{1} + \ketbra{1}{\phi_p} \right)
    \right] \\
    \mathbb{P}_{p}^\dagger\mathbb{P}_{1}^\dagger \Pi_{21} \mathbb{P}_{1}\mathbb{P}_p &= \frac{1}{6}\left[
        4\one - 2\sqrt{p}\left( \ketbra{\phi_p}{1} + \ketbra{1}{\phi_p} \right)
    \right] \\
    \mathbb{P}_{p}^\dagger\mathbb{P}_{1}^\dagger \Pi_{111} \mathbb{P}_{1}\mathbb{P}_p &= \frac{1}{6}\left[
        (1-p)\one - \ketbra{1}{1} - \ketbra{\phi_p}{\phi_p} + \sqrt{p}\left( \ketbra{\phi_p}{1} + \ketbra{1}{\phi_p} \right)
    \right]
\end{aligned}
\end{equation}
Further define $\mathbb{P}_2 \coloneqq \ketbra{1}{1} + \ketbra{2}{2}$ and
\begin{equation}
    \sigma_2 \coloneqq \ketbra{\phi_p}{1} + \ketbra{1}{\phi_p} - \sqrt{p}\mathbb{P}_2
    = \sqrt{p}\left(\ketbra{1}{1}-\ketbra{2}{2}\right) + \sqrt{1-p}\left(\ketbra{1}{2} + \ketbra{2}{1}\right).
\end{equation}
It can be easily verified that $\sigma_2$ has the eigenvalues $\pm 1$ with corresponding eigenstates $\ket{\pm} \propto \sqrt{1\pm\sqrt{p}}\ket{1} \pm \sqrt{1\mp\sqrt{p}}\ket{2}$. In terms of $\sigma_2$, we have $\ketbra{\phi_p}{\phi_p} = \ketbra{2}{2} + \sqrt{p}\sigma_2$, with which we can rewrite the other observables as
\begin{equation}
\begin{aligned}
    \mathbb{P}_{p}^\dagger\mathbb{P}_{1}^\dagger \Pi_{3} \mathbb{P}_{1}\mathbb{P}_p &= \frac{1}{6}\left[
        (1+p)\left(\one + \mathbb{P}_2\right) + 2\sqrt{p}\sigma_2
    \right] \\
    \mathbb{P}_{p}^\dagger\mathbb{P}_{1}^\dagger \Pi_{21} \mathbb{P}_{1}\mathbb{P}_p &= \frac{1}{6}\left[
        4\one - 2p\mathbb{P}_2 - 2\sqrt{p}\sigma_2
    \right] \\
    \mathbb{P}_{p}^\dagger\mathbb{P}_{1}^\dagger \Pi_{111} \mathbb{P}_{1}\mathbb{P}_p &= \frac{1}{6}\left[
        (1-p)\left(\one - \mathbb{P}_2\right)
    \right]
\end{aligned}
\end{equation}
Then, we have
\begin{equation}
\begin{aligned}
    \mathbb{P}_p^\dag \mathbb{P}_1^\dag W \mathbb{P}_1 \mathbb{P}_p &=
    \frac{1}{6}\big\{
        \left[ c_{(3)} + 4c_{(2,1)} + c_{(1,1,1)} + (c_{(3)}-c_{(1,1,1)})p \right]\one \\
    &\qquad\qquad{}+{}
        \left[ (c_{(3)}-c_{(1,1,1)}) (c_{(3)}-2c_{(2,1)}+c_{(1,1,1)})p \right]\mathbb{P}_{2} +
        \left[ 2(c_{(3)}-c_{(2,1)})\sqrt{p} \right] \sigma_2.
    \big\}
\end{aligned}
\end{equation}
Since $\ket{\pm}$ and $\ket{n \geq 3}$ are simultaneous eigenstates of $(\one,\mathbb{P}_{2},\sigma_2)$ with eigenvalues $(1,1,\pm 1)$ and $(1,0,0)$ respectively, we find that
\begin{equation}
    \operatorname{mineig}\left[\mathbb{P}_p^\dag \mathbb{P}_1^\dag W \mathbb{P}_1 \mathbb{P}_p\right] =
    \frac{1}{6}\min\Bigg\{
        2c_{(3)}+4c_{(2,1)} + 2(c_{(3)}-c_{(2,1)})p - 2 \left\lvert c_{(3)}-c_{(2,1)} \right\rvert \sqrt{p},
        \underbrace{c_{(3)}+4c_{(2,1)}+c_{(1,1,1)} + (c_{(3)}-c_{(1,1,1)})p}_{\text{if $d > 2$}}
    \Bigg\},
\end{equation}
and finally noting that the expressions are at most quadratic in $\sqrt{p}$, for which the minimum value over the range $0 \leq \sqrt{p} \leq 1$ can be found analytically,
\begin{equation}
\begin{aligned}
    \alpha_{[1,1,1]} &= \min_{0\leq p \leq 1}\operatorname{mineig}\left[\mathbb{P}_p^\dag \mathbb{P}_1^\dag W \mathbb{P}_1 \mathbb{P}_p\right] \\
    &= \frac{1}{6}\min_{0\leq p \leq 1}\min\Bigg\{
        2c_{(3)}+4c_{(2,1)} + 2(c_{(3)}-c_{(2,1)})p - 2 \left\lvert c_{(3)}-c_{(2,1)} \right\rvert,
        \underbrace{c_{(3)}+4c_{(2,1)}+c_{(1,1,1)} + (c_{(3)}-c_{(1,1,1)})p}_{\text{if $d > 2$}}
    \Bigg\}\\
    &= \frac{1}{6}\min\Bigg\{
        \begin{cases}
            6c_{(2,1)} & \text{if $c_{(3)} \leq c_{(2,1)}$} \\
            \frac{3}{2}c_{(3)} + \frac{9}{2}c_{(2,1)} & \text{otherwise}
        \end{cases},
        \underbrace{
            \begin{cases}
                2c_{(3)} + 4c_{(2,1)} & \text{if $c_{(3)} \leq c_{(1,1,1)}$} \\
                c_{(3)} + c_{(1,1,1)} + 4c_{(2,1)} & \text{otherwise}
            \end{cases}
        }_{\text{if $d > 2$}}
    \Bigg\} \\
    &= \min\Bigg\{
        \frac{1}{4}c_{(3)} + \frac{3}{4}\min\{c_{(3)},c_{(2,1)}\},
        \underbrace{
           \frac{2}{3}c_{(2,1)} + \frac{1}{6}c_{(3)} +
           \frac{1}{6}\min\{c_{(3)},c_{(1,1,1)}\}
       }_{\text{if $d > 2$}}
    \Bigg\} \\
    &= \min\Bigg\{
        c_{(3)},
        \frac{1}{4}c_{(3)} + \frac{3}{4}c_{(2,1)},
        \underbrace{
           \frac{2}{3}c_{(2,1)} + \frac{1}{6}c_{(3)} +
           \frac{1}{6}c_{(1,1,1)}
       }_{\text{if $d > 2$}}
    \Bigg\}.
\end{aligned}
\end{equation}
Turning back to the set $\mathcal{W}_{\k} = \{W : \forall \varrho \in \kSEP : \tr(\varrho W) \geq 0 \} = \{W : \alpha_{\k} \geq 0 \}$, we can find that $W \in \mathcal{W}_{[1,1,1]} \iff \alpha_{[1,1,1]} \geq 0 \iff \{
    c_{(3)} \geq 0 \land c_{(3)} + 3c_{(2,1)} \geq 0 \land 4c_{(2,1)} + c_{(3)} + c_{(1,1,1)} \geq 0
\}$.
The extremal rays of $\mathcal{W}_{\k}$ occur at
\begin{equation}
\begin{gathered}
    0 = c_{(3)} = c_{(3)} + 3c_{(2,1)} \leq 4c_{(2,1)} + c_{(3)} + c_{(1,1,1)} \\
    0 =  c_{(3)} + 3c_{(2,1)} = 4c_{(2,1)} + c_{(3)} + c_{(1,1,1)} \leq c_{(3)} \\
    0 = 4c_{(2,1)} + c_{(3)} + c_{(1,1,1)} = c_{(3)} \leq c_{(3)} + 3c_{(2,1)}.
\end{gathered}
\end{equation}
The first condition gives the ray $c_{(1,1,1)}\Pi_{(1,1,1)} : c_{(1,1,1)} \geq 0$, the second gives $c_{(1,1,1)}(3\Pi_{(3)} - \Pi_{(2,1)} + \Pi_{(1,1,1)}): c_{(1,1,1)} \geq 0$, while the third gives $c_{(2,1)}(\Pi_{(2,1)} - 4\Pi_{(1,1,1)}): c_{(2,1)} \geq 0$.
Therefore, $\mathcal{W}_{[1,1,1]} = \mathcal{C}[\{\Pi_{(1,1,1)},\Pi_{(2,1)}-4\Pi_{(1,1,1)},3\Pi_{(3)}-\Pi_{(2,1)}+\Pi_{(1,1,1)}\}]$.

\subsection{Full partial transpose positivity}
As mentioned in the main text, the separability witnesses $W_+ \coloneqq 4\Pi_{(3)} - \Pi_{(2,1)}$ and $W_- \coloneqq \Pi_{(2,1)} - 4\Pi_{(1,1,1)}$ are known from existing immanant inequalities.
Here, we shall show that they are decomposable, and therefore positive $\tr(\varrho W_\pm) \geq 0$ on states $\varrho \in \operatorname{FPPT}$ that are fully positive-partial-transpose such that $\varrho: \varrho^{T_1},\varrho^{T_2},\varrho^{T_3} \geq 0$.

Using \eqref{eq:tripartite-projectors}, we write $W_\pm$ in terms of the permutation operators as
\begin{equation}
\begin{aligned}
    W_\pm &= \frac{2}{3}\left[
        \eta_d((1,2)) + \eta_d((1,3)) + \eta_d((2,3))
    \right] \pm \left[
        \eta_d((1,2,3)) +
        \eta_d((1,3,2))
    \right]\\
    &= \sum_{n=1}^3 \frac{1}{3}\left[
        \eta_d((n,n+1)) + \eta_d((n+1,n+2)) \pm
        \eta_d((1,2,3)) \pm \eta_d((1,3,2))
    \right],\\
    &= \sum_{n=1}^3 \frac{1}{3}\left[
        \eta_d((n,n+1)) + \eta_d((n+1,n+2)) \pm
        \eta_d((n,n+1))\eta_d((n+1,n+2)) \pm
        \eta_d((n+1,n+2))\eta_d((n,n+1))
    \right] \\
    &= \sum_{n=1}^3 \frac{4}{3}\left\{
        \frac{1}{2}\left[\one \pm \eta_d((n+1,n+2))\right]
        \eta_d((n,n+1))
        \frac{1}{2}\left[\one \pm \eta_d((n+1,n+2))\right] 
    \right\},
\end{aligned}
\end{equation}
where the sum and difference in $\eta_d((n,n+1))$ and $\eta_d((n,n+2))$ are to be understood as modular, in that $\eta_d((3,3+1)) = \eta_d((3,1))$ and $\eta_d((3,3+2)) = \eta((3,2))$, and we decomposed the cyclic permutation $(1,2,3)$ into pairwise swaps.

Now, let us recognize that $\one_n\otimes\one_m\otimes\one_l \pm \eta_d((m,l)) \eqqcolon 2\one_n \otimes P_{\pm(m,l)}$, where $n\neq m \neq l$, is the projector $P_{+(m,l)} = \Pi_{(2)}$ (respectively $P_{-(m,l)} = \Pi_{(1,1)}$) onto the symmetric (respectively antisymmetric) subspace of the $m$th and $l$th system.
Furthermore, we also have that the partial transpose of the swap is $\eta_d((n,m))^{T_n} = 
\ket{\Phi}_{n,m}\!\!\bra{\Phi}\otimes\one_l$ where $\ket{\Phi} = \sum_{j=1}^d\ket{jj}$ is the unnormalized EPR state.
Substituting this back into the equation above,
\begin{equation}
\begin{aligned}
    W_\pm
    &= \sum_{n=1}^3 \frac{4}{3}
        \left(\one_{n}\otimes P_{\pm(n+1,n+2)}\right)
        \left(\ket{\Phi}_{n,n+1}\!\!\bra{\Phi}\otimes\one_{n+2}\right)^{T_n}
        \left(\one_{n}\otimes P_{\pm(n+1,n+2)}\right)
    \\
    &= \sum_{n=1}^3 \frac{4}{3}
        \left[
        \left(\one_{n}\otimes P_{\pm(n+1,n+2)}\right)
        \left(\ket{\Phi}_{n,n+1}\!\!\bra{\Phi}\otimes\one_{n+2}\right)
        \left(\one_{n}\otimes P_{\pm(n+1,n+2)}\right)
        \right]^{T_n}.
\end{aligned}
\end{equation}
Finally, notice that $(\one_{n}\otimes P_{\pm(m,l)})
        (\ket{\Phi}_{n,m}\!\!\bra{\Phi}\otimes\one_{l})
        (\one_{n}\otimes P_{\pm(m,l)})$
is proportional to the projection of the projector $\ketbra{\Phi}{\Phi}\otimes\one$ onto the symmetric or antisymmetric subspace.
This is clearly a positive operator, which implies that $W_\pm$ is the sum of partial transpose of positive operators, and therefore a decomposable witness.

\section{Characterization of four-partite systems}\label{app:charact4partite}
We first characterize numerically the set of four-ququarts fully PPT states. Similar results hold for this system size, in the sense that the expectation values $\mathcal{I}_{4,\mathrm{PPT}} = \{\tr \Pi_\lambda\rho|\rho~\mathrm{is~FPPT} \}$ of the fully PPT (FPPT) set are found to form a polytope. To learn this, we sample a number of random extreme points and cluster them into vertices in order to identify the vertices of the resulting polytope. Extreme points are sampled by maximizing random linear functions, which will give an extreme point for almost every linear function~\cite{tardella2011fundamental}. For every face of the polytope, there is an operator $F$ encoding the corresponding hyperplane, and we confirm that for fully PPT (FPPT) $\rho$ , $\tr F\rho\ge 0$ holds, with at least $4$ optimal FPPT states $\tr F\rho= 0$, showing that this empirically determined polytope is identical to $\mathcal{I}_{4,\mathrm{PPT}}$. 
We identify $7$ distinct vertices of this $4$-dimensional polytope. Since we can find integer coordinates for the vertices, we eschew normalization in table~\ref{tab:44fppt}, which lists and classifies the vertices of the FPPT polytope. The dual description in terms of polytope facets can be found in table~\ref{tab:dualdesc}.

\begin{table}[h!]
    \centering
    \begin{tabular}{ccccc|c||ccc}
         $\Pi_4$ & $\Pi_{3,1}$ & $\Pi_{2,2}$ & $\Pi_{2,1,1}$ & $\Pi_{1,1,1,1}$
         & $/\mathcal{N}$ &  Entanglement & Certificate & $\substack{\min \\ W\in wit}\tr(W\varrho)$\\
         1 & 0 & 0 & 0 & 0 & /1 & f.sep. & $\ket{0000}$ & 0 \\
         1 & 3 & 0 & 0 & 0 & /4 & f.sep. & $\ket{0001}$ & 0\\
         1 & 9 & 4 & 0 & 0 & /14 & ent. & $W_1$ & $-4/7$\\
         2 & 9 & 6 & 0 & 0 & /17 & ent. & $W_1$ & $-8/17$\\
         1 & 9 & 4 & 6 & 0 & /20 & ent. & $W_2$ & $-3/20$\\
         5 & 36 & 18 & 27 & 0 & /86 & ent. & $W_3$ & $-9/86$\\
         1 & 9 & 4 & 9 & 1 & /24 & f.sep. & $\ket{0123}$ & 0
    \end{tabular}
    \caption{Rows are integer coordinates
    $\mathcal{N}_i \tr{(\Pi_\lambda\rho_i)}$
    for extreme points $\rho_i$ of the fully PPT polytope with local dimension 4. For every extreme point, we successfully attempt to certify either that it is fully separable or provide a close-to-optimal indecomposable witness for detecting it. Separability, on the other hand, is certified by a product state with expectation values on Young projectors matching those in the given row. If such a product state is found, its twirl under $U^{\otimes 4}$ gives the associated Werner state. 
    } 
    \label{tab:44fppt}
\end{table}

To obtain witnesses, we consider an inner approximation and an outer approximation to the set of fully separable states. First we describe the inner approximation. We consider the set of expectation values of $\Pi_\lambda$ occurring in the fully separable set, i.e. $\mathcal{I}_4$, the convex closure of $\{\bra{abcd}\Pi_\lambda\ket{abcd}\}_{\ket{abcd}}$. We sample a dataset of 
real numbers $f_\lambda$ and minimize $\bra{abcd}\left(\sum_\lambda f_\lambda \Pi_\lambda\right)\ket{abcd}$ over $\ket{abcd}$, with the usual seesaw method~\cite{weinbrenner2025quantifying}. This optimization leads to an extremal point of $\mathcal{I}_4$ if it converges to a global optimum, and yields an interior point of $\mathcal{I}_4$ otherwise. 
Creating a large number of product states $\ket{\psi^{\mathrm{pr}}_i}$ in this manner provides a polytope $\tilde{\mathcal{I}}_4=\mathrm{conv}(\{\bra{\psi^{\mathrm{pr}}_i} \Pi_\lambda\ket{\psi^{\mathrm{pr}}_i}\}_i)$ given by the convex hull of the product states found. This polytope is an inner approximation of $\mathcal{I}_4$. 
Now, consider a point $\sum_\lambda x_\lambda\Pi_\lambda \not\in\mathcal{I}_4$ out of the set of separable states (namely an entangled state). We find a witness $W=\sum_\lambda c_\lambda\Pi_\lambda$ with respect to $\tilde{\mathcal{I}}_4$ detecting it, by 
a linear program $\min_{c_\lambda} \sum_\lambda c_\lambda x_\lambda,~\mathrm{s.t.}~\tr{\sum_\lambda c_\lambda \Pi_\lambda}=1,~\sum_\lambda c_\lambda\xi_\lambda\ge 0 ~(\forall\xi_\lambda\in\tilde{\mathcal{I}}_4)$. Notice that by assumption this witness has nonnegative expectation value on $\tilde{\mathcal{I}}_4$, but in principle might not for $\mathcal{I}_4$. However, it will be close to the optimal witness with respect to $\mathcal{I}_4$ if a sufficiently large number of product states has been sampled.

Then, we can use any relaxation $\mathcal{R}$ of the fully separable set, which is tighter than the FPPT set to determine $\epsilon:=\min_{\rho\in\mathcal{R}}\tr W \rho\le\min_{\ket{abcd}} \bra{abcd}W\ket{abcd}$. 
Then $\epsilon\mathds{1}+W$ is a certified entanglement witness. 
In our case, we use $\epsilon = -\min \tr W (\ketbra{0}{0}_\mathrm{A}\otimes X_{\mathrm{BCD}})$ over a FPPT states $X_{BCD}$ with $\tr(X)=1$. We perform this procedure to detect all extreme points of the PPT set which could not be reduced to separable states with the seesaw algorithm.
For the witnesses which we obtain in this manner, the parameter $\epsilon$ given by this relaxation is relatively small compared to the negative eigenvalues of the witness. 
Thus, we conjecture that it would vanish in an exact treatment of the separable set.
 
\begin{table}[h!]
    \centering
    \begin{tabular}{c|ccccc||cc}
         $\Pi_\lambda$ & $\Pi_4$ & $\Pi_{3,1}$ & $\Pi_{2,2}$ & $\Pi_{2,1,1}$ & $\Pi_{1,1,1,1}$
          & $\epsilon$ & $/\mathcal{N}$ \\
         $w_{1,\lambda}$ & 8 & 0 & -4 & 1 & -1 &   \SI{4.048e-2}{}  &$/164$ \\
         $w_{2,\lambda}$ & 12 & -1 & -3 & 1 & 0 &  \SI{5.476e-3}{} & $/164$\\
         $w_{3,\lambda}$ & 0 & 2 & -3 & -1 & 3 &  \SI{1.759e-2}{} & $/108$
    \end{tabular}
    \caption{Rows define indecomposable witnesses as $W_i = \left(\sum_\lambda (w_{i,\lambda}+\epsilon)\Pi_\lambda\right)/(\mathcal{N}+4^4\epsilon)$, that is, we use the dual of the basis in the previous table~\ref{tab:44fppt}. To our knowledge, the associated immanant inequalities are maximally tight for $4\times 4$ matrices and unknown in the literature.
    } 
    \label{tab:indecomposable_w}
\end{table}

\begin{table}[h!]
    \centering
    \begin{tabular}{c|ccccc||c}
         $\Pi_\lambda$ & $\Pi_4$ & $\Pi_{3,1}$ & $\Pi_{2,2}$ & $\Pi_{2,1,1}$ & $\Pi_{1,1,1,1}$ & $/\mathcal{N}$\\
        $f_{1,\lambda}$ & 0 & 0 & 0 & 0 & 1 & $/1$\\
$f_{2,\lambda}$ & 0 &  0 &  0 &  1 & -9& $/36$\\
$f_{3,\lambda}$ & 0 &   0 &  3 & -2 &   6 & $/36$\\
$f_{4,\lambda}$ & 0 &  6 & -9 & -2 &  0 & $/360$\\
$f_{5,\lambda}$ & 6 & -2 &  3 &  0 &  0 & $/60$\\
$f_{6,\lambda}$ & 18 &  2 & -9 &  0 &  0 & $/540$
    \end{tabular}
    \caption{Rows are integer coordinates for the dual of the PPT polytope with local dimension 4. Hence, this is a list of all decomposable witnesses $F_i = \sum_\lambda f_{i,\lambda}P_\lambda$ expressed in the basis of Young projectors. This is confirmed by minimizing the functional $\tr{F_i\rho}$ over (fully) PPT $\rho$. We further observe that for each of the $F$ in the list, $\tr{F_i\rho_j}$ vanishes for at least $4$ different vertices $\rho_j$ of the PPT polytope, confirming that the PPT set does not extend outside the previously identified PPT polytope, and is tightly contained inside, that is, the two sets are identical. Any of these six witnesses can be turned into an imminant inequality. 
    } 
    \label{tab:dualdesc}
\end{table}

\section{Proof of Theorem \ref{thm:exact_(1,n-1)_sep} and families of witnesses}\label{app:ThmExact}
In this section, we present a proof of Theorem \ref{thm:exact_(1,n-1)_sep}. It is based on Schur--Weyl duality and representation theory of the unitary group, and specifically the Gelfand--Tsetlin basis of the unitary group. We refer a reader to \cite{grinko2025mixed} for the necessary background.

\begin{proof}
Using symmetries of $W = \sum_{\lambda \vdash_d n} c_\lambda \Pi_\lambda$, where $\Pi_\lambda$ is the Young projector onto $\lambda$ isotypic component in the Schur--Weyl duality, the calculation of  $\alpha_{n-1|1}(W)$ can be reduced to the following problem:
\begin{align}
    \alpha_{n-1|1} &= \min_{\rho \in \kSEP} \tr \sof{\rho W} \\
    &= \min_{\ket{\Psi} \in (\C^d)^{\otimes n-1}} \min_{\ket{\psi} \in \C^d}  \tr \sof{ \of{\ketbra{\psi}{\psi} \otimes \ketbra{\Psi}{\Psi}} W} \\
    &= \min_{\ket{\Psi}} \tr \sof{\of{\ketbra{d}{d} \otimes \ketbra{\Psi}{\Psi}} W} \\
    &= \operatorname{mineig} \sum_{\lambda \vdash_d n} c_\lambda X^\lambda,
\end{align}
where for every partition $\lambda \vdash_d n$ we define the operator $X^\lambda \in \mathrm{End}((\C^d)^{\otimes n-1})$ as
\begin{align}
    X^\lambda := \of{\bra{d} \otimes I^{\otimes n-1}} \Pi_\lambda \of{\ket{d} \otimes I^{\otimes n-1}}.
\end{align}
Now note that 
\begin{align}
\label{eq:symmetry_Sn}
    [X^\lambda, \pi] &= 0 \text{ for every } \pi \in S_{n-1} \\
\label{eq:symmetry_Ud-1}
    [X^\lambda, U^{\otimes n-1}] &= 0, \text{ for every } U \in \U{d-1} := \Set{U \in \U{d} \given U\ket{d}=\ket{d}} \subset \U{d},
\end{align}
where the second symmetry follows from
\begin{equation}
    U^{\dagger \otimes n-1} X^\lambda U^{\otimes n-1} = \of{\bra{d} \otimes I^{\otimes n-1}} U^{\dagger \otimes n} \Pi_\lambda U^{\otimes n} \of{\ket{d} \otimes I^{\otimes n-1}} = X^\lambda \, \text{ for every } U \in \U{d-1}.
\end{equation}
Equation~(\ref{eq:symmetry_Sn}) means that, due to Schur--Weyl duality, we have
\begin{equation}
    U_{\text{ST}} X^\lambda U_{\text{ST}}^\dagger = \bigoplus_{\mu \vdash_d n-1} I_{d_\mu} \otimes X^\lambda_\mu,
\end{equation}
where $X^\lambda_\mu$ acts in the Gelfand--Tsetlin basis of the Weyl module $\mu$ and has dimension $m_\mu$. Due to Equation~(\ref{eq:symmetry_Ud-1}) and the subgroup adaptive property of the Gelfand--Tsetlin basis, $X^\lambda_\mu$ is actually a diagonal matrix:
\begin{equation}
    X^\lambda_\mu = \bigoplus_{\nu \sqsubseteq \mu} x^\lambda_{\mu,\nu} I_{m_\nu}
\end{equation}
In fact, $x^\lambda_{\mu,\nu}$ is a square of a particular reduced Wigner coefficient \cite{Vilenkin1992}. More precisely,
\begin{equation}
    x^\lambda_{\mu,\nu} = \delta_{\lambda \in \mu + \square} \frac{\prod_{i=1}^{d-1} \nu_i - i -\lambda_k + k}{\prod_{i = 1, i \neq k}^{d} \mu_i - i - \mu_k + k},
\end{equation}
where if $\lambda \in \mu + \square$ then $k$ is defined via identity $\lambda = (\mu_1,\dotsc,\mu_{k-1},\mu_k+1,\mu_{k+1},\dotsc,\mu_d)$, and symbol $\delta_{\lambda \in \mu + \square}$ is $1$ if Young diagram $\lambda$ is obtained from $\mu$ by adding a box and it is $0$ otherwise.
In summary, the spectrum of $X^\lambda$ is described by the following set:
\begin{equation}
    \mathrm{spec} \, X^\lambda = \Set{x^\lambda_{\lambda-r,\nu} \given r \in \mathrm{RC}(\lambda), \, \nu \sqsubseteq \lambda-r},
\end{equation}
where $\mathrm{RC}(\lambda)$ denotes the set of removable boxes of the Young diagram $\lambda$, and the symbol $\sqsubseteq$ denotes interlacing condition. Now we just need to select the minimum eigenvalue from this explicit spectrum. That proves the claimed result.
\end{proof}

\begin{cor}
For any $k\geq 0$, $d \geq k + 2$, $n \geq k+2$ consider a witness
\begin{equation}
W^{(k)} = \frac{\Pi_{(n-k,1^{k},0^{d-k-1})}}{\binom{n-1}{k}^2} - \frac{\Pi_{(n-k-1,1^{k+1},0^{d-k-2})}}{\binom{n-1}{k+1}^2}.
\end{equation}
Then its minimal expectation value $\alpha_{n-1|1}$ is independent of $d$ and equals
\begin{equation}
\alpha_{n-1|1}(W^{(k)}) \;=\;
\begin{cases}
-\dfrac{1}{\binom{n-1}{k+1}^{2}}, 
& \text{if } k \leq n-3 \\[15pt]
-\dfrac{n-2}{(n-1)n},
& \text{if } k = n-2.
\end{cases}
\end{equation}
\end{cor}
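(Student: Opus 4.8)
The plan is to specialize Theorem~\ref{thm:exact_(1,n-1)_sep} to the two--term witness $W^{(k)}$ and to lean on two facts that already appear in its proof: each reduced--Wigner weight $x^\lambda_{\mu,\nu}$ (the fraction in the theorem's formula) is nonnegative because it is literally a square, and $\sum_\lambda x^\lambda_{\mu,\nu}=1$ for every fixed pair $(\mu,\nu)$ with $\nu\sqsubseteq\mu$, since $\sum_\lambda\Pi_\lambda=I$ forces $\sum_\lambda X^\lambda=I^{\otimes n-1}$. Writing $\lambda^{(j)}:=(n-j,1^{j})$, the hook--length formula gives $d_{\lambda^{(j)}}=\binom{n-1}{j}$, so the only nonzero coefficients of $W^{(k)}$ are $c_{\lambda^{(k)}}=+\binom{n-1}{k}^{-2}>0$ and $c_{\lambda^{(k+1)}}=-\binom{n-1}{k+1}^{-2}<0$. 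In the theorem's minimisation, a partition $\mu\vdash_d n-1$ gives a nonzero inner sum only when it has an addable corner $e_r$ with $\mu+e_r\in\{\lambda^{(k)},\lambda^{(k+1)}\}$; since deleting a box from a hook yields a hook, the relevant $\mu$ are precisely the hooks $\mu_{(j)}:=(n-1-j,1^{j})$ with leg $j\in\{k-1,k,k+1\}$ (all satisfy $\ell(\mu_{(j)})\le d$ for $d\ge k+2$), and all other $\mu$ contribute $0$.

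For $k\le n-3$ I would first establish the lower bound $\alpha_{n-1|1}(W^{(k)})\ge-\binom{n-1}{k+1}^{-2}$. If $\mu=\mu_{(k-1)}$ only the positive corner is reachable, so the inner sum is $c_{\lambda^{(k)}}x^{\lambda^{(k)}}_{\mu_{(k-1)},\nu}\ge0$. If $\mu=\mu_{(k)}$ both corners appear, but dropping the nonnegative $c_{\lambda^{(k)}}$--term and using $x^{\lambda^{(k+1)}}_{\mu_{(k)},\nu}\le1$ (from $\sum_\lambda x=1$, $x\ge0$) gives an inner sum $\ge c_{\lambda^{(k+1)}}=-\binom{n-1}{k+1}^{-2}$. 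If $\mu=\mu_{(k+1)}$ only the negative corner appears, and $c_{\lambda^{(k+1)}}x\ge c_{\lambda^{(k+1)}}$ again. For the matching upper bound I would invoke an elementary identity: if $\lambda=\mu+e_1$ (box added to the first row) and $\nu^\star:=(\mu_2,\dots,\mu_d)$ is the minimal interlacing partition, then after the reindexing $i\mapsto i+1$ the numerator and denominator of the reduced--Wigner fraction are the same product, so $x^{\mu+e_1}_{\mu,\nu^\star}=1$. Since $\mu_{(k+1)}+e_1=\lambda^{(k+1)}$, the pair $(\mu,\nu)=(\mu_{(k+1)},\nu^\star)$ realises $c_{\lambda^{(k+1)}}\cdot1$, so equality holds; and the value is manifestly $d$--independent because at $\nu^\star$ the competing weights vanish and the surviving one equals $1$ for every admissible $d$.

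The case $k=n-2$ is genuinely different and must be handled separately, because $\mu_{(k+1)}=(0,1^{n-1})$ is not a partition: the only hook that still sees the negative coefficient is $\mu_{(n-2)}=(1^{n-1})$, and it \emph{also} sees $c_{\lambda^{(n-2)}}$, which can no longer be zeroed out. For $\mu=(1^{n-1})$ the interlacing condition admits only $\nu\in\{(1^{n-2}),(1^{n-1})\}$, and its only addable corners with $\ell\le d$ (using $d\ge n$) are the row--$1$ box, giving $\lambda^{(n-2)}=(2,1^{n-2})$, and the foot of the leg, giving $\lambda^{(n-1)}=(1^{n})$; hence $x^{\lambda^{(n-2)}}_{\mu,\nu}+x^{\lambda^{(n-1)}}_{\mu,\nu}=1$ and the inner sum equals $\tfrac{1}{(n-1)^2}x^{\lambda^{(n-2)}}_{\mu,\nu}-\bigl(1-x^{\lambda^{(n-2)}}_{\mu,\nu}\bigr)$, which is strictly increasing in $x^{\lambda^{(n-2)}}_{\mu,\nu}$. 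By the identity above $x^{\lambda^{(n-2)}}_{\mu,(1^{n-2})}=1$, while a direct evaluation of the product formula gives $x^{\lambda^{(n-2)}}_{\mu,(1^{n-1})}=\tfrac{n-1}{n}$ (all $d$--dependent factorials cancel, leaving $(n-1)\cdot\tfrac{(n-2)!}{(n-1)!}$). Minimising over the two admissible $\nu$, and noting that $\mu_{(n-3)}$ contributes only the nonnegative term $c_{\lambda^{(n-2)}}x\ge0$, gives $\alpha_{n-1|1}(W^{(n-2)})=\tfrac{1}{(n-1)^2}\cdot\tfrac{n-1}{n}-\tfrac1n=-\tfrac{n-2}{n(n-1)}$, which is negative for $n>2$ and hence the global minimum.

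The only computation carrying any content is the boundary--case evaluation $x^{\lambda^{(n-2)}}_{(1^{n-1}),(1^{n-1})}=\tfrac{n-1}{n}$; everything else is bookkeeping of which hooks and which addable corners occur (arranged so that the $\ell\le d$ constraint never touches the terms that matter, which is exactly why the answer is $d$--free) together with the one--line telescoping that pins the extremal weight to $1$. In a write--up I would isolate ``$x^{\mu+e_1}_{\mu,\nu^\star}=1$'' as a short lemma and then dispatch the two regimes as above; the mild subtlety worth flagging is that the generic argument has to be set up so that it degenerates \emph{precisely} at $k=n-2$, which is what produces the split in the statement.
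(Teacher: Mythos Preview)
Your proof is correct and follows essentially the same route as the paper: specialize Theorem~\ref{thm:exact_(1,n-1)_sep}, observe that only the three hook parents $\mu_{(k-1)},\mu_{(k)},\mu_{(k+1)}$ can contribute, use nonnegativity and $\sum_\lambda x^\lambda_{\mu,\nu}=1$ for the lower bound, and exhibit a saturating $\nu$ for the upper bound. Your packaging is in fact slightly cleaner than the paper's---isolating $x^{\mu+e_1}_{\mu,\nu^\star}=1$ with $\nu^\star=(\mu_2,\dots,\mu_d)$ as a lemma gives the saturation in one line, and in the boundary case you correctly enumerate both admissible $\nu\in\{(1^{n-2}),(1^{n-1})\}$ and pick the minimiser $(1^{n-1})$, whereas the paper's stated choices of $\nu$ appear to contain typos even though its final values are right.
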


\begin{proof}
In the following, we use the notation
\begin{equation}
\lambda_1=(n-k,1^{k},0^{d-k-1}), 
\qquad 
\lambda_2=(n-k-1,1^{k+1},0^{d-k-2}),
\qquad
d_{\lambda_1}=\binom{n-1}{k},
\quad
d_{\lambda_2}=\binom{n-1}{k+1},
\end{equation}
and expand the witness as 
\begin{equation}
W^{(k)}=\sum_{\lambda}c_\lambda \Pi_\lambda,
\qquad
c_{\lambda_1}=\frac{1}{d_{\lambda_1}^2},
\qquad
c_{\lambda_2}=-\frac{1}{d_{\lambda_2}^2},
\qquad
c_{\lambda}=0\ \text{otherwise}.
\end{equation}
The minimum we are optimizing is
\begin{equation}
\alpha_{n-1|1}(W^{(k)})
=
\min_{\mu\vdash_d (n-1)}\ 
\min_{\nu\sqsubseteq \mu}
\ 
\sum_{e_k\in\mathrm{AC}_d(\mu)} c_{\mu+e_k}\,w_{k}^{(\mu,\nu)},
\end{equation}
where $w_{k}^{(\mu,\nu)}$ is the coefficient from Theorem~\ref{thm:exact_(1,n-1)_sep}:
\begin{equation}
    w_{k}^{(\mu,\nu)} = \frac{\prod_{i=1}^{d-1} \nu_i - i - \mu_k + k - 1}{\prod_{i = 1, i \neq k}^{d} \mu_i - i - \mu_k + k} = \frac{\prod_{i=1}^{d-1} \nu_i - i -\lambda_k + k}{\prod_{i = 1, i \neq k}^{d} \mu_i - i - \lambda_k + k + 1},
\end{equation}
where $\mu + e_k = \lambda$ and $\nu \sqsubseteq \mu$.

In the following, we say that a partition $\mu$ is ``parent'' of a partition $\lambda$ (or $\lambda$ is ``child'' of $\mu$) if $\mu$ can be obtained from $\lambda$ by removing some box from $\lambda$. 

First note, that there are only two possible parents of $\lambda_1$:
\begin{equation}
\mu_1=(n-k,1^{k-1},0^{d-k}),\qquad
\mu_2=(n-k-1,1^{k},0^{d-k-1}).
\end{equation}
Each such parent $\mu$ has $\mu+e_k = \lambda_1$ as one child and other child is $\mu+e_{k'}$ with $c_{\mu+e_{k'}}=0$.
Hence
\begin{equation}
E_{\mu}(\nu)=c_{\lambda_1}\,w_{k}^{(\mu,\nu)}\ge 0.
\end{equation}
Thus no parent of $\lambda_1$ can lower the minimum below zero. Negative values come only from parents of $\lambda_2$. Now let's consider two different cases.

\begin{itemize}
    \item \textit{Generic case $n \ge k+3$.} The partition $\mu_3=(n-k-2,1^{k+1},0^{d-k-2})$ is a valid parent of $\lambda_2$, since adding a box to row $1$ yields
$\mu_3+e_{1}=\lambda_2$.
All other children of $\mu_3$ have $c_\lambda=0$. Choose the interlacing partition
$\nu=(n-k-2,1^{k},0^{d-k-2})$,
which saturates all inequalities $\mu_{3,i} \ge \nu_i \ge \mu_{3,i+1}$.
Substituting this $\nu$ into the product formula, one finds
$w_{1}^{(\mu_3,\nu)}=1$.
Thus $\mu_3$ can place all weight on the negative child $\lambda_2$, giving
\begin{equation}
E_{\mu_3}^{\min}
=
-\frac{1}{d_{\lambda_2}^2}
=
-\frac{1}{\binom{n-1}{k+1}^2}.
\end{equation}
Since every parent of $\lambda_1$ contributes $\ge 0$, and no other parent of $\lambda_2$
can achieve a value \(< -1/d_{\lambda_2}^2\), we obtain
\begin{equation}
\alpha_{n-1|1}(W^{(k)}) = -\frac{1}{\binom{n-1}{k+1}^{2}}
\qquad\text{for } n\ge k+3.
\end{equation}

\item \textit{Boundary case $n=k+2$.}
Here 
\begin{equation}
\lambda_1=(2,1^{k},0^{d-k-1}),\qquad
\lambda_2=(1^{k+2},0^{d-k-2}),
\end{equation}
and the parent $\mu_3$ does not exist.
The only parent for $\lambda_2$ is also a parent for $\lambda_1$:
\begin{equation}
\mu_2=(1^{k+1},0^{d-k-1}).
\end{equation}
We choose $\nu=(1^{k},0,\dots,0)$. A direct evaluation of the coefficients yields
\begin{equation}
w_{1}^{(\mu_2,\nu)}=\frac{k+1}{k+2},
\qquad
w_{k+2}^{(\mu_2,\nu)}=\frac{1}{k+2}.
\end{equation}
Using $d_{\lambda_1}=\binom{k+1}{k}=k+1$ and $d_{\lambda_2}=1$, we obtain
\begin{equation}
E_{\mu_2}^{\min}
=
\frac{1}{(k+1)^2}\cdot\frac{k+1}{k+2}
\;-\;
1\cdot \frac{1}{k+2}
=
-\frac{k}{(k+1)(k+2)}.
\end{equation}
Since every parent of $\lambda_1$ gives $\ge 0$, this is the global minimum:
\begin{equation}
\alpha_{n-1|1}(W^{(k)})
=
-\frac{k}{(k+1)(k+2)}
\qquad\text{for }n=k+2.
\end{equation}
\end{itemize}
Combining the cases, we get
\begin{equation}
\alpha_{n-1|1}(W^{(k)})
=
\begin{cases}
-\dfrac{1}{\binom{n-1}{k+1}^{\,2}}, & n\ge k+3,\\[6pt]
-\dfrac{k}{(k+1)(k+2)}, & n=k+2,
\end{cases}
\end{equation}
as claimed.
\end{proof}

The case $n=k+2$ provides valid witnesses because the Hook inequalities hold between immanant inequalities~\cite{pate1992immanantHook}, and the resulting witness is non-PSD. So it allows to detect high-dimensional $[n-1|1]$-separable states with $d \geq n$.

\begin{cor}
For any $a \geq b$, $n = a + b$, $a-b \geq 2k > 0$ consider a qubit operator
\begin{equation}
W =
\frac{\Pi_{(a,b)}}{d_{(a,b)}^{\,2}} - \frac{\Pi_{(a-k,b+k)}}{d_{(a-k,b+k)}^{\,2}}, \qquad 
d_{(n-r,r)} = \binom{n}{r}-\binom{n}{r-1} =
\binom{n}{r}\frac{n-2r+1}{n-r+1},
\end{equation}
Then,
\begin{equation}
\alpha_{n-1|1}(W) =
\begin{cases}
-\dfrac{1}{d_{(a-k,b+k)}^{\,2}}, 
& \text{if } a-b \geq 2k+1, \\[10pt]
-\dfrac{1}{2\,d_{(a-k,b+k)}^{\,2}}, 
& \text{if } a-b=2k, \, k > 1 \\[10pt]
\dfrac{1}{2}\of*{
\dfrac{1}{d_{(a,b)}^{\,2}}
-\dfrac{1}{d_{(a-1,b+1)}^{\,2}}
},
& \text{if } a-b=2, \, k = 1.
\end{cases}
\end{equation}
\end{cor}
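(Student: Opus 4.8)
The plan is to feed $W$ into Theorem~\ref{thm:exact_(1,n-1)_sep} with $d=2$ and exploit that $W$ is supported on only two Young projectors. In local dimension two every partition has at most two rows, so in the minimization of the theorem one writes $\mu=(\mu_1,\mu_2)$ with $\mu_1+\mu_2=n-1$, the interlacing partner is a single integer $\nu=(\nu_1)$ with $\mu_1\ge\nu_1\ge\mu_2$, and the addable corners of $\mu$ are $e_1$ (always) and $e_2$ (present iff $\mu_1>\mu_2$). Writing $p\coloneqq\mu_1-\nu_1$ and $q\coloneqq\mu_1-\mu_2$, so that $0\le p\le q$, the coefficients of Theorem~\ref{thm:exact_(1,n-1)_sep} specialize to
\begin{equation}
w_1^{(\mu,\nu)}=\frac{p+1}{q+1},\qquad w_2^{(\mu,\nu)}=\frac{q-p}{q+1},
\end{equation}
which are nonnegative and sum to $1$, being squares of $\mathrm{SU}(2)$ reduced Wigner coefficients. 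The first step is to make this substitution carefully and record these formulas together with the corner-addability rule.

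The second step is to isolate the partitions $\mu$ that can contribute a negative number. Since $c_\lambda\neq 0$ only for $\lambda\in\{(a,b),(a-k,b+k)\}$, with $c_{(a,b)}=+1/d_{(a,b)}^2$ and $c_{(a-k,b+k)}=-1/d_{(a-k,b+k)}^2$, any $\mu$ none of whose children is $(a-k,b+k)$ contributes a sum of nonnegative terms. In $d=2$ the partition $(a-k,b+k)$ has at most two parents: $\mu_{\mathrm I}=(a-k-1,b+k)$, reached through $e_1$ and a legal partition precisely when $a-b\ge 2k+1$; and $\mu_{\mathrm{II}}=(a-k,b+k-1)$, reached through $e_2$, whose row-$2$ corner is addable exactly under the standing hypothesis $a-b\ge 2k$. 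I would flag at this point that $\mu_{\mathrm I}$ and $\mu_{\mathrm{II}}$ do not interact except when $k=1$: there $(a-k,b+k)=(a-1,b+1)$ and $\mu_{\mathrm{II}}=(a-1,b)$ is simultaneously the $e_1$-parent of $(a,b)$, so both children of $\mu_{\mathrm{II}}$ carry a nonzero coefficient and the single parameter $p$ couples their weights.

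The third step is to optimize $\nu$ (i.e.\ $p$) over each candidate parent and compare the resulting values. For $\mu_{\mathrm I}$, when it exists, the only nonzero coefficient among its children is $c_{(a-k,b+k)}$, attached through $e_1$; taking $\nu_1=\mu_2$ gives $w_1=1$ and contribution $-1/d_{(a-k,b+k)}^2$. For $\mu_{\mathrm{II}}$ with $k>1$ the only nonzero coefficient among its children is again $c_{(a-k,b+k)}$, now through $e_2$, and with $q=a-b-2k+1$ the choice $p=0$ maximizes $w_2=q/(q+1)$, giving $-\tfrac{1}{d_{(a-k,b+k)}^2}\tfrac{a-b-2k+1}{a-b-2k+2}$. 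For $\mu_{\mathrm{II}}$ with $k=1$ the contribution is $c_{(a,b)}\tfrac{p+1}{q+1}+c_{(a-1,b+1)}\tfrac{q-p}{q+1}$ with $q=a-b-1$, again minimized at $p=0$. Splitting into the three regimes then closes the argument: if $a-b\ge 2k+1$ then $\mu_{\mathrm I}$ exists and a short comparison — which after clearing denominators is $q/(q+1)<1$ when $k>1$, and $1/d_{(a,b)}^2+1/d_{(a-1,b+1)}^2>0$ when $k=1$ — shows its value $-1/d_{(a-k,b+k)}^2$ is smaller than any contribution of $\mu_{\mathrm{II}}$; if $a-b=2k$ with $k>1$ then $\mu_{\mathrm I}$ is not a partition, $\mu_{\mathrm{II}}$ has $q=1$, and the minimum is $-1/(2d_{(a-k,b+k)}^2)$; if $a-b=2$ and $k=1$ then $\mu_{\mathrm I}$ is again absent, $\mu_{\mathrm{II}}=(a-1,b)$ has $q=1$, and at $p=0$ the value is $\tfrac12\bigl(1/d_{(a,b)}^2-1/d_{(a-1,b+1)}^2\bigr)$. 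As a consistency check, the choice $a=n$, $b=0$, $k=n/2$ falls in the second regime and reproduces $\alpha_{n-1|1}=-1/(2D^2)$ from Eq.~\eqref{eq:NQubitWitness}.

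I expect the main friction to be purely combinatorial: keeping straight exactly when $\mu_{\mathrm I}$ and the row-$2$ corner of $\mu_{\mathrm{II}}$ are admissible near the boundary $a-b=2k$, and handling the $k=1$ coincidence, where one must genuinely optimize a convex combination of the two weights rather than just saturate one of them. One should also remark — as in the previous corollary — that the operators $W$ are not positive semidefinite, a consequence of the rank-two immanant (``hook'') inequalities relating $\per$ and $\imm_{(n-r,r)}$~\cite{pate1994immanantRank2,pate1992immanantHook}, so that the shifted operator $W-\alpha_{n-1|1}(W)\,\mathds{1}$ is a genuine witness detecting qubit states with $\kappa\neq[n-1|1]$.
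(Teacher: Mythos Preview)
Your proof is correct and follows essentially the same route as the paper: specialize Theorem~\ref{thm:exact_(1,n-1)_sep} to $d=2$, reduce the interlacing parameter to a single integer, observe that the two addable-corner weights are nonnegative and sum to one, and then restrict attention to the (at most two) parents of $(a-k,b+k)$, splitting into the three cases $a-b\ge 2k+1$, $a-b=2k$ with $k>1$, and $a-b=2$ with $k=1$. Your parametrization via $p=\mu_1-\nu_1$ and $q=\mu_1-\mu_2$ is a cosmetic variant of the paper's $t,x,y$; your explicit comparison of $\mu_{\mathrm I}$ and $\mu_{\mathrm{II}}$ in the first case and your identification of the $k=1$ ``coincidence'' (where $\mu_{\mathrm{II}}$ is simultaneously a parent of both $\lambda_1$ and $\lambda_2$) are slightly more spelled out than the paper, but the argument is the same.
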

\begin{proof}
Let $d=2$ and $n=a+b$ with $a\ge b\ge 0$. Consider the two--row partitions
$\lambda_1=(a,b)$ and $\lambda_2=(a-k,b+k)$, where $k\in\mathbb{Z}_{\ge 1}$
and $\Delta\coloneqq a-b\ge 2k$ so that $\lambda_2$ is a valid two--row shape.
We use the $[n-1|1]$ minimization formula from the Theorem~\ref{thm:exact_(1,n-1)_sep}:
\begin{equation}\label{eq:opt_func_cor_2}
\alpha_{n-1|1}(W)
=
\min_{\mu\vdash_2\, (n-1)}
\ \min_{\nu\sqsubseteq \mu}
\ \sum_{e_k \in\mathrm{AC}_d(\mu)} c_{\mu+e_k} \, w_k^{(\mu,\nu)}, \qquad w_{k}^{(\mu,\nu)} = \frac{\prod_{i=1}^{d-1} \nu_i - i - \mu_k + k - 1}{\prod_{i = 1, i \neq k}^{d} \mu_i - i - \mu_k + k},
\end{equation}
where $c_{\lambda_1}=d_{\lambda_1}^{-2}$, $c_{\lambda_2}=-d_{\lambda_2}^{-2}$ and all other $c_\lambda=0$.
For qubits ($d=2$), every parent $\mu=(x,y)$ with $x\ge y\ge 0$ has exactly two
addable corners: $e_1$ adds to row $1$, $e_2$ adds to row $2$, producing the children
$\mu+e_1=(x+1,y)$ and $\mu+e_2=(x,y+1)$. The interlacing diagram $\nu$ is parametrized by a single integer $t$ with $y \le t\le x$, so the coefficients $w_k := w^{((x,y),(t))}_k$ specialize to
\begin{equation}
w_1=\frac{x-t+1}{x-y+1},\qquad
w_2=\frac{t-y}{x-y+1},\qquad
w_1+w_2=1.
\end{equation}
Only parents that can reach at least one of $\lambda_1,\lambda_2$ by adding one box can contribute to the sum in Eq.~(\ref{eq:opt_func_cor_2}). In the following, we will analyse the function
\begin{equation}
    E_{\mu^{(k)}}(t) := c_{\mu+e_k} \, w_k^{(\mu,(t))}.
\end{equation}

\emph{(i) Parents of $\lambda_1=(a,b)$.}
Removing from row $1$ gives $\mu_{1}^{(1)}=(a-1,b)$ (children $\lambda_1$ and $(a-1,b+1)$);
removing from row $2$ (if $b\ge 1$) gives $\mu_{1}^{(2)}=(a,b-1)$ (children $(a+1,b-1)$ and $\lambda_1$).
In both cases, the other child has coefficient $0$ in $W$, so
\begin{equation}
E_{\mu_{1}^{(k)}}(t) = \frac{1}{d_{\lambda_1}^{2}}\,w_{k} \ \ge\ 0.
\end{equation}
Hence parents of $\lambda_1$ cannot lower the minimum below $0$.

\emph{(ii) Parents of $\lambda_2=(a-k,b+k)$.}
Removing from row $1$ gives $\mu_{2}^{(1)}=(a-k-1,b+k)$ (children $\lambda_2$ and $(a-k-1,b+k+1)$);
removing from row $2$ gives $\mu_{2}^{(2)}=(a-k,b+k-1)$ (children $(a-k+1,b+k-1)$ and $\lambda_2$).
In both cases the other child has coefficient $0$. Therefore
\begin{equation}
E_{\mu_{2}^{(k)}}(t)= -\frac{1}{d_{\lambda_2}^{2}}\,w_k \leq 0,
\end{equation}
and to minimize we aim to maximize the coefficient $w_k$.
To do that, we write $\Delta = a-b$, and consider several cases:
\begin{itemize}
\item \textit{Case A: $\Delta \ge 2k+1$.} Then $\mu_{2}^{(1)}=(a-k-1,b+k)$ is a valid parent (since $a-k-1\ge b+k$). Choosing $t=b+k$ gives $w_1=1$ and $w_2=0$, hence
\begin{equation}
E_{\mu_{2}^{(1)}}^{\min}=-\frac{1}{d_{\lambda_2}^{2}}.
\end{equation}
Since parents of $\lambda_1$ contribute $\ge 0$, we obtain
\begin{equation}
\alpha_{n-1|1}(W)=-\frac{1}{d_{(a-k,b+k)}^{2}}
\qquad\text{for }\Delta\ge 2k+1.
\end{equation}

\item \textit{Case B: $\Delta=2k> 2$.} Here $\mu_{2}^{(1)}$ is invalid (it would have first part $a-k-1<b+k$), but $\mu_{2}^{(2)}=(a-k,b+k-1)$ is valid with $x-y=(a-k)-(b+k-1)=1$.
Thus the maximal coefficient of the row--2 child $\lambda_2$ is achieved at $t=x=a-k$ and equals
\begin{equation}
w_2^{\max}=\frac{x-y}{x-y+1}=\frac{1}{2}.
\end{equation}
Hence
\begin{equation}
E_{\mu_{2}^{(2)}}^{\min}=-\frac{1}{2}\cdot \frac{1}{d_{\lambda_2}^{2}},
\end{equation}
and therefore
\begin{equation}
\alpha_{n-1|1}(W)=-\frac{1}{2\,d_{(a-k,b+k)}^{2}}
\qquad\text{for }\Delta=2k > 2.
\end{equation}

\item \textit{Case C: $k=1$ and $\Delta=2$.} Here $\lambda_2=(a-1,b+1)$ and the parent $\mu=(a-1,b)$ has both children. Since $x-y=(a-1)-b=1$, the interlacing choice $t=x$ yields $w_1=w_2=\tfrac12$, so
\begin{equation}
\alpha_{n-1|1}(W)=
\frac{1}{2}\left(\frac{1}{d_{(a,b)}^{2}}-\frac{1}{d_{(a-1,b+1)}^{2}}\right)
\qquad\text{for }\Delta=2.
\end{equation}
\end{itemize}

Collecting the three cases gives
\begin{equation}
\alpha_{n-1|1}(W)
=
\begin{cases}
-\dfrac{1}{d_{(a-k,b+k)}^{2}}, & \Delta\ge 2k+1, \\[6pt]
-\dfrac{1}{2\,d_{(a-k,b+k)}^{2}}, & \Delta=2k, \\[6pt]
\dfrac{1}{2}\!\left(\dfrac{1}{d_{(a,b)}^{2}}-\dfrac{1}{d_{(a-1,b+1)}^{2}}\right),
 & k=1,\ \Delta=2~.
\end{cases}
\end{equation}
Finally, for the adjacent pair $(a,b)=(n,0)$ and $(a-1,b+1)=(n-1,1)$ one has
$d_{(n,0)}=1$ and $d_{(n-1,1)}=n-1$, and Case A gives
\begin{equation}
\alpha_{n-1|1}(W)=-\frac{1}{(n-1)^2}\qquad(n\ge 3),
\end{equation}
while $\alpha_{1|1}=0$ for $n=2$.
\end{proof}

\section{Equivalence to immanant inequalities}

Two well-known families of immanant inequalities for $n\times n$ matrices are:
\begin{itemize}
    \item The {\em hook rule}~\cite{pate1992immanantHook}: given two partitions of the form $\lambda=(\lambda_1,1^{n-\lambda_1})$ and $\mu=(\mu_1,1^{n-\mu_1})$ with $\lambda_1\geq\mu_1$, then
\begin{equation}
    \frac{\imm_\lambda(M)}{\chi_\lambda(\id)} \geq \frac{\imm_\mu(M)}{\chi_\mu(\id)}\,.
\end{equation}
\item The {\em Lieb's permanent dominance} (conjectured) and {\em Schur's inequality}: it is long believed that the permanent is the largest normalized immanant~\cite{LiebPermanent1966}; and the determinant was shown to be smaller than any normalized immanant~\cite{schur1918DetSmall},
\begin{equation}
    \per(M)\geq \frac{\imm_\lambda(M)}{\chi_\lambda(\id)}\geq \det(M)
\end{equation}
for all $\lambda\vdash n$. The permanent dominance has been proven true for matrices of size $n\leq 13$~\cite{pate1999tensorPerDomN13}, and therefore can be used to construct entanglement witnesses for qudit dimensions up to $13$.
\end{itemize}

We follow~\cite{MaassenSlides}. Any positive semidefinite, rank-$r$ matrix $G$ can be written as $G=VV^\dag$, where $V$ has columns $\ket{v_i}\in\C^r$. Therefore, we have
\begin{align}
    \imm_\lambda(G)&=\sum_{\pi\in S_n}\chi_\lambda(\pi)\prod_{i=1}^n G_{i,\pi(i)}\\
    &=\sum_{\pi\in S_n}\chi_\lambda(\pi)\prod_{i=1}^n \bra{v_i}v_{\pi(i)}\rangle\\
    &=\sum_{\pi\in S_n}\chi_\lambda(\pi)\tr\Big (\pi_d \bigotimes_{i=1}^n\ket{v_{i}}\bra{v_i}\Big )\\
    &=\tr\bigg (\frac{n!}{\chi_\lambda(\id)} \Pi_\lambda \bigotimes_{i=1}^n\ket{v_{i}}\bra{v_i}\bigg )\,.
\end{align}
It follows that
\begin{align}
    \sum_{\lambda\vdash n}a_\lambda\imm(G) &= \sum_{\lambda\vdash n}\frac{n! a_\lambda}{\chi_\lambda(\id)} \tr\Big (\Pi_\lambda \bigotimes_{i=1}^n\ket{v_{i}}\bra{v_i}\Big )\\
    &= \sum_{\lambda\vdash n}\frac{n! a_\lambda}{\chi_\lambda(\id)} \tr\Big (\Pi_\lambda \varrho\Big )
\end{align}
where $\varrho$ is a product state. It is clear that if the left hand side is nonnegative on positive semidefinite matrices $G$, then the right hand side is nonntegative on product states, and by convexity also on separable states. The converse direction holds analogously, since all steps are equalities.

For instance, the immanant inequalities and corresponding Young projector orderings above apply from $4\times 4$ as:
\begin{align}
    &\per(M)\geq \frac{\imm_{(3,1)}(M)}{3} \geq \frac{\imm_{(2,1^2)}(M)}{3}\geq\det(M) \\
    &\langle \Pi_{(4)}\rangle_\varrho \geq \frac{\langle \Pi_{(3,1)}\rangle_\varrho }{9} \geq \frac{\langle \Pi_{(2,1^2)}\rangle_\varrho}{9}\geq \langle \Pi_{(1^4)}\rangle_\varrho\quad\forall\,\varrho\in\text{SEP}
\end{align}
And:
\begin{align}
    &\per(M)\geq \frac{\imm_{(2,2)}(M)}{2} \geq\det(M) \\
    &\langle \Pi_{(4)}\rangle_\varrho  \geq \frac{\langle \Pi_{(2,2)}\rangle_\varrho}{4}\geq \langle \Pi_{(1^4)}\rangle_\varrho\quad\forall\,\varrho\in\text{SEP}\,.
\end{align}
To our knowledge, this is a complete list of all known immanant inequalities before this work. For completeness we provide the table of characters for 4-partite systems below.
\begin{table}[h!]
    \centering
    \begin{tabular}{c | c c c c c}
     & $(\id)$ & $(12)(3)(4)$ & $(12)(34)$ & $(123)(4)$ & $(1234)$ \\
     \hline
        $\lambda=(4)$ & $1$ & $1$ & $1$ & $1$ & $1$  \\
        $\lambda=(3,1)$ & $3$ & $1$ & $-1$ & $0$ & $-1$  \\
        $\lambda=(2,2)$ & $2$ & $0$ & $2$ & $-1$ & $0$  \\
        $\lambda=(2,1^2)$ & $3$ & $-1$ & $-1$ & $0$ & $1$  \\
        $\lambda=(1^4)$ & $1$ & $-1$ & $1$ & $1$ & $-1$  \\
    \end{tabular}
    \caption{{\bf Characters of each permutation type $(\pi)$ for each irrep partition $\lambda$.}}
    \label{tab:my_label}
\end{table}

\end{widetext}
\end{document}